\documentclass[a4paper,UKenglish,cleveref, autoref, thm-restate]{lipics-v2021}
\nolinenumbers

\usepackage{booktabs} 
\usepackage{todonotes}
\usepackage{algorithm}
\usepackage[noend]{algpseudocode}
\usetikzlibrary{arrows.meta, positioning, calc, shapes}

\title{A Separator-based Algorithm for the Graph Edit Distance Problem}

\author{Laura B{\"u}lte\footnote{corresponding author}}{University of Bonn, Germany}{lbuelte@uni-bonn.de}{https://orcid.org/0009-0002-2697-4076}{} 

\author{Philip Mayer}{University of Bonn, Germany}{pmayer@uni-bonn.de}{https://orcid.org/0009-0007-4800-7753}{}

\author{Lars M{\"u}ller\texorpdfstring{\footnotemark[1]}{}}{University of Bonn, Germany}{lars-mueller@uni-bonn.de}{https://orcid.org/0009-0003-7656-227X}{}

\author{Petra Mutzel}{University of Bonn, Germany}{pmutzel@uni-bonn.de}{https://orcid.org/0000-0001-7621-971X}{}

\authorrunning{L. B{\"u}lte, P. Mayer, L. M{\"u}ller, P. Mutzel} %

\Copyright{Laura B{\"u}lte, Philip Mayer, Lars M{\"u}ller, Petra Mutzel} 

\ccsdesc[500]{Theory of computation~Parameterized complexity and exact algorithms}
\ccsdesc[500]{Mathematics of computing~Graph algorithms}

\keywords{Graph Edit Distance, Graph Similarity, Exact Algorithms, Recursive Algorithms, Graph Separators, Exponential Time Algorithms}

\funding{This research was partially funded by the Deutsche Forschungsgemeinschaft (DFG, German Research Foundation) under grant FOR-5361 -- 459420781.}

\acknowledgements{We thank Dr.\ Tamás Horváth for his insightful comments.}

\DeclareMathOperator{\NEC}{c_V}
\newcommand{\ANEC}[1]{\mathop{\mathrm{c_V^{#1}}}} 
\DeclareMathOperator{\EEC}{c_E}
\DeclareMathOperator{\IEC}{c}
\DeclareMathOperator{\GED}{\mathrm{GED}}
\DeclareMathOperator{\dom}{\mathrm{dom}} 
\DeclareMathOperator{\im}{\mathrm{im}} 
\newcommand{\posreals}{\mathbb{R}_{\geq 0}}
\newcommand{\PE}[1]{\binom{#1}{2}} 
\newcommand{\floor}[1]{\left \lfloor #1 \right \rfloor}
\newcommand{\ceil}[1]{\left \lceil #1 \right \rceil}
\newcommand{\restr}[2]{\left.#1\right|_{#2}}
\newcommand{\bij}[2]{\mathrm{Bij}(#1, #2)}
\newcommand{\inj}[2]{\mathrm{Inj}(#1, #2)}
\newcommand{\bigO}[1]{\mathcal{O}\!\left(#1\right)}
\newcommand{\notbigO}[1]{o\!\left(#1 \right)}
\newcommand{\expObase}{\mathcal{O}^*}
\newcommand{\expO}[1]{\expObase\!\left(#1\right)}
\newcommand{\classP}{\mathsf{P}}
\newcommand{\classNP}{\mathsf{NP}}
\newcommand{\AStar}{A$^*$}

\DeclareMathOperator{\deltaEC}{c_{\delta}}

\newcommand{\AEC}[1]{\mathop{\mathrm{c}^{#1}}}
\newcommand{\AGED}[1]{\GED^{#1}}
\DeclareMathOperator{\deltaECplus}{c_{\delta}^+}
\DeclareMathOperator{\impsi}{\im(\psi) }
\DeclareMathOperator{\impsione}{\im(\psi_1)}
\DeclareMathOperator{\impsitwo}{\im(\psi_2)}
\DeclareMathOperator{\dompsi}{\dom(\psi)}
\newcommand{\mindown}[2]{\min_{#1} \left( #2 \right)}
\newcommand{\minback}[2]{\min \left\{ #2 : #1 \right\}}
\newcommand{\psicomplement}{\chi}
\newcommand{\set}[1]{\left\{ #1 \right\}}
\newcommand{\alphasbalsep}[2]{\ensuremath{(#1,#2)}-balanced-separable}

\begin{document}

\maketitle

\begin{abstract}
The Graph Edit Distance (GED) is a widely used graph similarity measure asking for the minimum cost of a sequence of edits transforming one (labeled) graph into another. The considered edit operations are deletion, insertion, and substitution of nodes and edges.
Special cases include the Graph Isomorphism problem, as well as many other graph problems that ask for the existence or minimum cost of a certain substructure, like the Traveling Salesman or Maximum Clique problem.

We present a novel exponential time algorithm to compute the exact GED and a corresponding edit sequence in $\expO{(4 + \varepsilon)^n}$ time and polynomial space, provided one of the two graphs admits strictly sublinear balanced separators. 
In particular, the claimed runtime holds if one of the graphs is $K_h$-minor-free (e.g., planar), or has bounded treewidth, which is the case for many real-world applications (e.g., all instances in GEDLIB).
This substantially improves the best known worst-case running time bounds of $\expO{n!}$ for these graph classes.
\end{abstract}

\newpage

\section{Introduction}
\label{sec:typesetting-summary}

The Graph Edit Distance (GED) is a flexible and widely used graph similarity measure. It poses the simple question: How much does it cost to transform one graph into another?
Due to its flexibility, the GED has found applications across a broad range of domains, including cancer detection and the alignment of protein–protein interaction networks \cite{appl-gedevo}, biometric identification \cite{appl-biom-hand,appl-biom-retina}, molecular similarity scoring~\cite{appl-mol-sim}, rational drug design \cite{appl-drug}, as well as malware detection and classification \cite{appl-bin-cmp,appl-adg}.

Classically, the GED problem is defined as follows \cite{ged-first-intro}: We are given two node- and edge-labeled graphs, and the costs of the (elementary) edit operations -- inserting and deleting isolated nodes, inserting and deleting edges, and substituting nodes and edges. The task is to find the minimum cost of a sequence of edit operations transforming one graph into the other. 
For algorithmic purposes, this definition is not well-suited. 
Hence, most approaches use a definition based on node mappings \cite{bunke-etgraphmatch}.
 Given graphs $G$ and $H$, we ask for the minimum cost of an injective mapping of (a subset of) the nodes of $G$ to (a subset of) the nodes of $H$. A mapping incurs the following costs:
(1) mapped nodes incur substitution costs;
unmapped nodes in $G$ and $H$ incur deletion and insertion costs, respectively;
(2) for a pair of edges $e\in E(G)$, $f \in E(H)$, if the endpoints of $e$ are mapped to the endpoints of $f$, we say $e$ and $f$ are mapped, and they incur substitution costs;
if $e$ is not mapped to any edge in $H$, it incurs deletion costs,
and if $f$ is not mapped to any edge in $G$, it incurs insertion costs.

The two formulations have a natural correspondence between their cost functions.
While every node map can be transformed into a sequence of edit operations, the converse does not generally hold.
However, if the elementary edit costs satisfy the properties of a metric, any optimum sequence of edits corresponds to an optimum node map \cite{ilp-mip-survey}.
Consequently, the GED problem formulation on edit sequences with metric costs is equivalent to the GED problem formulation on node maps. 
In the next section, we present a more formal definition of the GED based on node mappings, which is the formulation used in our algorithm.

Exact approaches for GED computation usually resort to informed tree searches (e.g.,\ \cite{ts-astar-old,ts-astar-bmao}) or solving ILPs (e.g.,\ \cite{ilp-justice,ilp-fori}). 
Although these approaches may perform well on real-world data, they typically lack strong, provable worst-case running time guarantees.
Naturally, the question arises for which cases we can find exponential time algorithms that beat a complete exploration of the search space in $\expO{n!}$ worst-case running time. In this work we give a positive answer for several graph classes.

\subparagraph*{Our Contribution.}

We propose SR-GED, a novel recursive algorithm for the exact computation of the GED that exploits small balanced separators in one input graph.
To the best of our knowledge, SR-GED achieves a new best worst-case running time of $\expO{(4+\varepsilon)^n}$ (for any fixed $\varepsilon >0$) on many graph classes, such as planar graphs, $K_h$-minor-free graphs, or graphs with bounded treewidth, using only polynomial space.
The recursive nature of SR-GED makes it suitable for combination with practically efficient alternative approaches. Furthermore, it can easily be adapted to provide a node map minimizing the edit costs.

\section{Related Work}

\subparagraph*{Theoretical Results.}

Computing the GED is an NP-hard problem, even for unit edit costs, by a reduction from the Subgraph Isomorphism (SGI) problem \cite{apx-stars}. Computing the GED remains NP-hard even if one of the two graphs is a cycle or path, by reductions from Hamiltonian Cycle and Hamiltonian Path, respectively, to SGI.
Furthermore, unless the exponential time hypothesis (ETH) is false, SGI cannot be decided in time $n^{\notbigO{n}}$ \cite{sgi-hardness}, thus neither can the GED be computed in time $n^{\notbigO{n}}$ in general. This is essentially the asymptotic runtime of a brute-force approach. Therefore, achieving better runtimes necessitates a restriction of the problem.
Yet, even if \emph{both} graphs are restricted to general trees \cite{hardness-trees}, the GED remains NP-hard. Hence, no FPT algorithm parameterized by treewidth exists unless $\classP = \classNP$. However, an exponential time algorithm with running time in $\bigO{2^{2n/3}}$ is known for the general tree-case \cite{exptime-trees}.
A more restricted problem on ordered trees, the Tree Edit Distance can be solved in polynomial time \cite{ted-survey}.
The GED is also hard to approximate, even for unit edit costs: No PTAS can exist unless $\classP = \classNP$ \cite{lin-apx-hardness}, and there is no polynomial-time approximation algorithm with any finite ratio unless Graph Isomorphism is in $\classP$ \cite{blumenthal-thesis}.

Moreover, several relevant problems can be reduced to GED. The Maximum Common Subgraph problem on labeled graphs is equivalent to a GED problem where insertions and deletions have cost $1$ and substitutions cost $2$ \cite{ged-mcs}. A symmetric Quadratic Assignment problem (QAP) formulation (Koopmans and Beckmann \cite{kb-qap}) can be reduced to a GED instance of the same size in polynomial time (see Appendix \ref{appendix:qap}). 
Finally, the TSP on a graph~$G$ with edge weights $w(e)$ can be reduced to a GED instance by mapping a cycle of size $n$ to~$G$: Deleting any edge of the cycle is made prohibitively expensive, and substituting an edge on the cycle with an edge $e \in E(G)$ is given cost $w(e)$. Other costs are set to zero. For TSP, the worst-case time complexity of $\bigO{2^n \cdot n^2}$ due to Held and Karp \cite{held-karp} has resisted improvement for decades.
Thus, we are unlikely to find an algorithm computing the GED in time $\expO{(2-\delta)^n}$ for some $\delta > 0$ even if one graph is a cycle.

\subparagraph*{Practical Approaches.}

Algorithms for exact GED computation can mainly be divided into two categories: informed tree searches and integer linear programming (ILP) formulations.

Tree searches enumerate node mappings between the graphs in a search tree, starting from the empty mapping (an exception is CSI-GED, which maps edges instead \cite{ts-csi}). 
Each level of the tree extends the current mapping by assigning one additional node. Then, informed search strategies use lower and upper bounds to guide the exploration.
Early methods used bipartite matching heuristics \cite{ts-astar-old} to compute lower bounds, and numerous improvements have been proposed since.
Approaches use Best-First Search (\AStar) \cite{ts-astar-old,ts-iso-nodes,ts-astar-lsa,ts-astar-bmao}, Depth-First Search \cite{ts-dfs}, or Beam Stack Search \cite{ts-bss}. 
Additionally, node-isomorphisms have been leveraged to reduce the search space \cite{ts-iso-nodes,ts-bss}. Significant progress has been made in improving both the tightness and efficiency of lower bounds \cite{ts-bss,ts-astar-lsa,ts-astar-bmao}. State-of-the-art methods such as \AStar-BMao \cite{ts-astar-bmao} can solve benchmark instances with around 50 nodes in reasonable time.

Complementary to tree search methods, integer programming approaches encode node mappings as binary variables. From early formulations \cite{ilp-justice,ilp,ilp-mip-survey} to recent models \cite{ilp-fori}, their scalability has improved substantially, with the latter solving instances with up to 100 nodes to optimality and outperforming state-of-the-art tree search techniques such as \AStar-BMao~\cite{ts-astar-bmao} by several orders of magnitude \cite{ilp-fori}.

Since exact computation is still intractable for large instances, practical applications typically estimate the GED by solving minimum weight bipartite matching problems \cite{apx-stars,ged-applications}.

\section{Preliminaries} \label{prelims}

Let $G = (V, E)$ and $H = (W, F)$ be two simple, undirected graphs.
For clarity of presentation, we only consider pairs of simple graphs where $n := |V| = |W|$. 
We take the vertices of both graphs to be enumerated as $V = W = \{1, \ldots, n\}$,
which gives $E,F\subseteq\PE{V}=\PE{W}$.

We start this section with a formal definition of the Graph Edit Distance (GED) problem based on node maps. 
We require the mapping costs to be \emph{reasonable} in the following way: For a pair of nodes $v\in V$ and $w\in W$, the cost of substituting $v$ by $w$ must not exceed the cost of deleting $v$ and inserting $w$ combined. Analogously, for a pair of edges $e \in E$ and $f \in F$, the cost of substituting $e$ by $f$ must not exceed the cost of deleting $e$ and inserting $f$ combined.
Then, since $G$ and $H$ have the same number of nodes, there is an optimum node map that substitutes all nodes. It therefore suffices to consider optimum solutions as bijective mappings between $V$ and $W$.

\begin{definition}[Node Map]
A node map is a bijection $\phi: V \to W$ between the nodes of the two graphs.
We denote $\phi[X] := \{\phi(x) : x \in X\}$ for $X \subseteq V$.
A bijection $\psi: V' \to W'$ between two subsets $V' \subseteq V$ and $W' \subseteq W$ is called a partial node map.
\end{definition}

\begin{definition}[Edit Costs]
A node edit cost function $\NEC: V \times W \to \posreals$ denotes the cost incurred when editing a node $v$ in $G$ to match a node $w$ in $H$.
An edge edit cost function $\EEC: \PE{V} \times \PE{W} \to \posreals$ denotes the cost incurred when editing an edge in $G$ to match an edge in $H$.
\end{definition}

Since $V = W$ and $\PE{V} = \PE{W}$, $\NEC$ and $\EEC$ can be thought of as square matrices with dimensions $n$ resp.\ $\binom{n}{2}$.
We explicitly allow all pairs in $\PE{V}$ and $\PE{W}$ in the edge edit costs. Hence, the edge edit costs incorporate substitutions, insertions and deletions. More specifically, for potential edges $e\in \PE{V} $ and $f\in \PE{W} $ we get the following edge edit costs:

\begin{itemize}
    \item If $e \in E$ and $f \in F$, we take $\EEC(e, f)$ to be the cost of substituting $e \to f$.
    \item If $e \in E$ but $f \notin F$, we take $\EEC(e, f)$ to be the cost of deleting $e$, also denoted $\EEC(e, \emptyset)$.
    \item If $e \notin E$ but $f \in F$, we take $\EEC(e, f)$ to be the cost of inserting $f$, also denoted $\EEC(\emptyset, f)$.
    \item If $e \notin E$ and $f \notin F$, we define $\EEC(e, f) := 0$ for ease of notation: 
    If neither edge is present, no edit is necessary.
\end{itemize}

Note that we do not define costs for node deletion or insertion, since they will never be incurred by an optimum solution anyway.

Based on these definitions, we define the edit costs induced by a node map as follows.

\begin{definition}[Induced Edit Costs]\label{def:IEC}
Let $\phi\in\bij{V}{W}$ be a node map. Let $\NEC(\phi) := \sum_{v \in V} \NEC(v, \phi(v))$ and $\EEC(\phi) := \sum_{e \in \PE{V}} \EEC(e, \phi[e])$ be the induced node and edge edit costs respectively.
Then the induced edit costs of $\phi$ are defined as
\begin{equation*}
    \IEC(\phi) := \NEC(\phi) + \EEC(\phi) .
\end{equation*}
\end{definition}

Definition \ref{def:IEC} also applies to partial node maps $\psi\in\bij{V'}{W'}$ for $V'\subseteq V, W'\subseteq W$.
Intuitively, $\IEC(\phi)$ is the cost of edits which, when applied to $G$, make $\phi$ an isomorphism between $G$ and $H$.
When computing the GED, we strive to find a node map that is already as close to an isomorphism as possible.

\begin{definition}[Graph Edit Distance]
The Graph Edit Distance (GED) between $G$ and $H$ is defined as the minimum induced edit cost over all possible node maps:
\begin{equation*}
    \GED(G, H, \NEC, \EEC) := \min \{\IEC(\phi) : \phi \in \bij{V}{W}\} 
\end{equation*}
\end{definition}

For some algorithms, for example, \AStar-based graph searches and also our own algorithm, it is sometimes useful to define the GED if a partial node map is already fixed.

\begin{definition}[Restricted Graph Edit Distance] \label{def:restrictedGED}
    Let $U\subseteq V$. For a fixed partial node map $\psi\in\inj{U}{W}$, we define the restricted GED by
    \begin{equation*}
       \AGED{\psi}(G,H,\NEC, \EEC) := \minback{\phi\in\bij{V}{W}, \restr{\phi}{U}=\psi}{\IEC(\phi)} .
    \end{equation*}
\end{definition}

We will usually suppress $\NEC$ and $\EEC$, writing just $\GED(G, H)$.
However, in the course of our algorithm we adapt the node edit costs $\NEC$, thus we need to be explicit about them here.
In addition to finding the GED, it is often useful to also determine a mapping $\phi$ that minimizes $\IEC(\phi)$. 
The algorithm presented here straightforwardly permits this. 

In Appendix~\ref{appendix:restrictions}, we show that all the restrictions we imposed in our problem formulation can be removed with a polynomial-time preprocessing. 

Next, we define the notion of a separator, which is at the core of our algorithm.

\begin{definition}[Balanced Separator]
Let $G$ be a graph and $S \subseteq V(G)$. The set $S$ is called an $\alpha$-separator of $G$ if the graph $G - S$ is the disjoint union of two graphs $G_1$ and $G_2$ such that
\(
\max\{\,|V(G_1)|,\ |V(G_2)|\,\} \le \alpha |V(G)|.
\)
The parameter $\alpha$ is called the balance factor.
\end{definition}

Another definition of a balanced separator allows the decomposition to consist of multiple connected components $G_1, \dots, G_k$ such that $\max_i |V(G_i)| \leq \alpha n$, but in our work we explicitly require the decomposition to consist of exactly two (possibly non-connected) components. 

We will later (cf.\ \Cref{thm:runtime}) consider graphs with a bound on the size of balanced separators which also extends to subgraphs. This is captured in the following definition:

\begin{definition}[\alphasbalsep{\alpha}{s}]
Let $s: \mathbb{N} \to \mathbb{R}_{\ge 0}$ be a  function and let $\alpha \in (0,1)$.
\begin{itemize}
    \item A graph $G$ is \alphasbalsep{\alpha}{s} if for every node-induced subgraph $G[X], X \subseteq V(G)$, there exists an $\alpha$-separator $S \subseteq X$ of $G[X]$ such that \(|S| \le s(|X|)\).
\item A class $\mathcal{G}$ of graphs is called \alphasbalsep{\alpha}{s} if every graph $G \in \mathcal{G}$ is \alphasbalsep{\alpha}{s}.
\end{itemize}
\end{definition}

Finally, for the runtime analysis, we need the following notions:
Let $f: \mathbb{N} \to \mathbb{R}$ be a function. We call (the order of growth of) $f$ \emph{strictly sublinear} if $f = \bigO{n^{1 - \delta}}$ for some fixed $\delta \in (0, 1]$. 
Further, we write $f(n) = \expO{g(n)}$ if $f(n) = \bigO{\mathrm{poly}(n) \cdot g(n)}$ to suppress polynomial factors in the runtime of exponential time algorithms.

\section{Separator-based Graph Edit Distance Computation}

Our algorithm is inspired by classical separator-based divide-and-conquer schemes. In these approaches, one finds balanced separators and recursively splits the instance into smaller subinstances until they are small enough to be solved by enumeration. 
For the GED, the problem is more difficult because instead of a single graph $G$, we need to also consider a second graph $H$, and more importantly, the set of possible node maps $\phi$ relating the two. 

At each recursive step, we can still divide $G$ as usual. However, we must ensure that the separator $S$ is mapped optimally. 
This requires enumerating all partial node maps $\psi$ of $S$ to $V(H)$. For each of these, we then need to find an optimum mapping $\psi_r$ for the remaining  graph $G_r=G-S$, i.e., we must also enumerate all possible mappings $\psi_r$ interlocked with~$\psi$.\begin{figure}[tbp]
    \centering
    \begin{minipage}[t]{0.46\textwidth}
        \centering
        \includegraphics[width=0.9\linewidth]{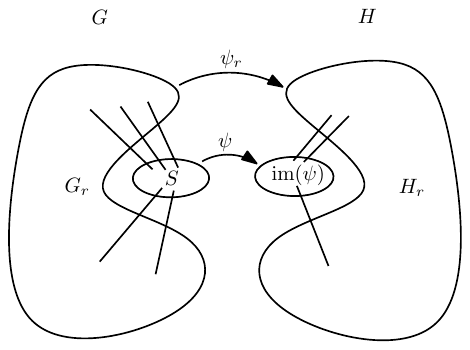}
        \captionsetup{textformat=simple}\caption{Given $S$ and a fixed $\psi$, the edge edit cost of $\delta_G(S)$ and $\delta_H(\text{im}(\psi))$  can be absorbed into the node edit costs for $G_r$ and $H_r$.}\label{fig:adapted_costs}
    \end{minipage}
    \hfill
    \begin{minipage}[t]{0.46\textwidth}
        \centering
        \includegraphics[width=0.9\linewidth]{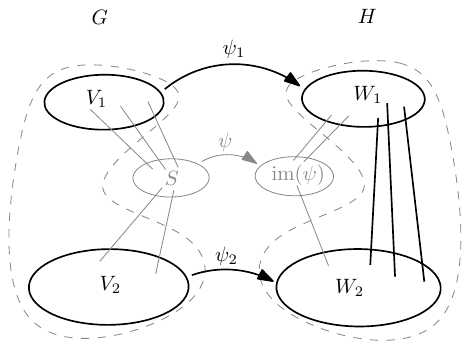}
        \caption{If we remove a separator $S$, $G[V_1]$ and $G[V_2]$ are disjoint, but $H[W_1]$ and $H[W_2]$ are not necessarily disjoint. Still $\psi_1$ and $\psi_2$ are independent if $W_1$ and $W_2$ are fixed.}\label{fig:disjoint_graphs}
    \end{minipage}
\end{figure}

The main results of this section show that the latter enumeration can be carried out more efficiently. First, for a fixed $\psi$, we adapt the node edit costs to account for the edges in the cut of $G_r$ in $G$ and the cut of the corresponding $H_r$ in $H$ (see Figure~\ref{fig:adapted_costs}). Following that, the optimum $\psi_r$ can be determined independently of $\psi$. Second, when $G_r$ decomposes into two disjoint subgraphs $G_1$ and $G_2$ (as occurs when a separator is removed), only the images in $H_r$ of the partial node maps $\psi_1$ and $\psi_2$ must be compatible (see Figure~\ref{fig:disjoint_graphs}). Consequently, once the images $W_1$ and $W_2$ of the mappings are fixed in $H$, the optimum mappings from $G_1$ to $H[W_1]$ and from $G_2$ to $H[W_2]$ can be computed independently, reducing the size of the subinstances significantly. If the size of the subproblems drops geometrically due to balanced separators, and the separators are small enough such that the cost of enumerating the partial mappings $\psi$ is tolerable, we obtain a runtime of $2^{\bigO{n}}$.

We begin this section by defining the \emph{adapted node edit costs} and discussing their implications for the restricted GED. Further, we explain how instances consisting of disjoint graphs can be handled more efficiently. We then give a detailed description of our algorithm. Finally, we combine these results to show correctness, and provide a detailed runtime analysis.

\subsection{Adapted Node Edit Costs}

The algorithm adapts node edit costs to enable a reduction to independent subproblems despite the separator connecting the subgraphs of~$G$.
For the remainder of this section, we consider simple, undirected graphs $G = (V, E)$ and $H = (W, F)$ with $n := |V| = |W|$.

\begin{definition}[Cut Edit Costs] \label{def:deltaec}
    Let $\phi\in\bij{V}{W}$, $V_1\sqcup V_2=V$, and $\psi_i := \restr{\phi}{V_i}$ for $i=1,2$. 
    We define the cut edit costs as
    \begin{equation*}
        \deltaEC(\psi_1, \psi_2) := \sum_{v_1\in V_1}\sum_{v_2\in V_2}\EEC(\{v_1,v_2\},\{\psi_1(v_1),\psi_2(v_2)\}) .
    \end{equation*}
\end{definition}

Let $W_i := \im(\psi_i)$ for $i=1,2$.
As $\phi$ is a bijection and, therefore, $W_1\sqcup W_2=W$ and $\psi_i\in\bij{V_i}{W_i}$, $\deltaEC$ accounts for the mapping costs between the cuts $\delta_G(V_1) = \delta_G(V_2)$ and $\delta_H(W_1) = \delta_H(W_2)$, where $\delta_G(U) = \set{\set{u,v}\in E(G) : u\in U, v\in V(G)\setminus U }$.
By definition of $\EEC$ this also accounts for all insertion and deletion costs for such edges.
The following lemma shows that the induced edit costs of $\phi$ split exactly into the contributions from the partial node maps $\psi_1$ and $\psi_2$, together with the cut edit costs $\deltaEC$.

\begin{lemma} \label{lemma:delatec}
    Let $\phi\in\bij{V}{W}$, $V_1\sqcup V_2=V$, and $\psi_i := \restr{\phi}{V_i}$ for $i=1,2$. We then have
\begin{equation*}
    \IEC(\phi) = \IEC(\psi_1) + \IEC(\psi_2) + \deltaEC(\psi_1, \psi_2) .
\end{equation*}
\end{lemma}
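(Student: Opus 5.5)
The plan is to expand both sides using \Cref{def:IEC} and then split the two sums, node terms and edge terms, according to the partition $V_1\sqcup V_2=V$.

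\emph{Node part.} Since $V$ is the disjoint union of $V_1$ and $V_2$, and $\psi_i$ agrees with $\phi$ on $V_i$, we can write
\begin{equation*}
\NEC(\phi)=\sum_{v\in V_1}\NEC(v,\psi_1(v))+\sum_{v\in V_2}\NEC(v,\psi_2(v))=\NEC(\psi_1)+\NEC(\psi_2).
\end{equation*}

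\emph{Edge part.} We partition the unordered pairs in $\PE{V}$ into three disjoint classes: $\PE{V_1}$, $\PE{V_2}$, and the mixed pairs $\{v_1,v_2\}$ with $v_1\in V_1$ and $v_2\in V_2$. This is a genuine partition because each pair $e\in\PE{V}$ has either both endpoints in one part or exactly one endpoint in each.

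For $e\in\PE{V_i}$ we have $\phi[e]=\psi_i[e]$. Summing over $\PE{V_i}$ therefore gives exactly $\EEC(\psi_i)$, where $\EEC(\psi_i)$ is read, as the paper intends for partial maps, as the sum over $\PE{\dom(\psi_i)}$.

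Each mixed pair is written uniquely as $\{v_1,v_2\}$ with $v_1\in V_1$ and $v_2\in V_2$, and its image is $\phi[e]=\{\psi_1(v_1),\psi_2(v_2)\}$. Summing over all such pairs is therefore the double sum $\sum_{v_1\in V_1}\sum_{v_2\in V_2}$ with no double counting. This is precisely $\deltaEC(\psi_1,\psi_2)$ from \Cref{def:deltaec}.

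Adding the node part and the three edge classes regroups into $\IEC(\psi_1)+\IEC(\psi_2)+\deltaEC(\psi_1,\psi_2)$, which is the claimed identity.

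\emph{Possible subtleties.} The argument is purely bookkeeping. The only points needing care are the following.
\begin{itemize}
\item We must check that the ordered double sum in $\deltaEC$ counts each cut pair exactly once. It does, since $V_1\cap V_2=\emptyset$ means no pair is counted twice with the roles of $v_1$ and $v_2$ swapped.
\item We must fix the convention that $\EEC$ of a partial map $\psi_i$ sums only over pairs inside its domain.
\item Non-edges contribute $0$ by definition of $\EEC$, so no case distinction on edge presence is needed.
\end{itemize}
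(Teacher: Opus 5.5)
Your proposal is correct and is essentially the paper's proof: the node costs split trivially over $V_1\sqcup V_2$, and the edge sum over $\PE{V}$ splits into the sums over $\PE{V_1}$, $\PE{V_2}$, and the cut pairs, the last of which is exactly $\deltaEC(\psi_1,\psi_2)$. Your remarks on the convention that $\EEC(\psi_i)$ sums over $\PE{V_i}$, and on the cut sum counting each pair once, spell out points the paper leaves implicit.
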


\begin{proof}
    It is clear that $\NEC(\phi)=\NEC(\psi_1) + \NEC(\psi_2)$. We further have
    \begin{align*}
        \EEC(\phi) &= \sum_{e \in \PE{V}} \EEC(e, \phi[e]) \\
        &=\sum_{e \in \PE{V_1}} \EEC(e, \psi_1[e])+\sum_{e \in \PE{V_2}} \EEC(e, \psi_2[e]) + \sum_{v_1 \in V_1, v_2 \in V_2} \EEC(\{v_1, v_2\}, \{\psi_1(v_1), \psi_2(v_2)\}) \\
        &= \EEC(\psi_1) + \EEC(\psi_2) + \deltaEC(\psi_1,\psi_2)
        \qedhere
    \end{align*}
\end{proof}

We will later attribute the costs of editing the cut edges to the node edit costs of one part of the partition. 

\begin{definition}[Adapted Node Edit Costs] \label{def:anec}
     Let $\phi\in\bij{V}{W}$, $V_1\sqcup V_2=V$, and $\psi_i := \restr{\phi}{V_i}$ for $i=1,2$. 
     We define the adapted node edit costs for $v\in V_2, w\in \impsitwo$ by
    \begin{equation*}
        \ANEC{\psi_1}(v,w) := \NEC(v,w) + \sum_{v' \in V_1}\EEC(\{v',v\},\{\psi_1(v'), w\}) ,
    \end{equation*}
    and, consistent with the definition of induced edit costs, we denote
    \begin{equation*}
        \AEC{\psi_1}(\psi_2) := \sum_{v\in V_2} \ANEC{\psi_1}(v, \psi_2(v)) + \EEC(\psi_2) .
    \end{equation*}
\end{definition}

By inserting the definition of $\AEC{\psi_1}(\psi_2)$ into Lemma~\ref{lemma:delatec}, we can split the edit costs of~$\phi$ into the sum of two partial node map costs, with the cut edit costs absorbed into one of the maps.

\begin{observation} \label{observation:aec}
    We have $\IEC(\phi) = \IEC(\psi_1) + \AEC{\psi_1}(\psi_2)$ and $\IEC(\phi) = \IEC(\psi_2) + \AEC{\psi_2}(\psi_1)$. 
\end{observation}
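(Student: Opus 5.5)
The plan is to start from Lemma~\ref{lemma:delatec}, which gives $\IEC(\phi) = \IEC(\psi_1) + \IEC(\psi_2) + \deltaEC(\psi_1,\psi_2)$. It then suffices to show that $\IEC(\psi_2) + \deltaEC(\psi_1,\psi_2) = \AEC{\psi_1}(\psi_2)$. The second identity should follow by symmetry.

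For the first identity, I expand $\IEC(\psi_2) = \sum_{v\in V_2}\NEC(v,\psi_2(v)) + \EEC(\psi_2)$ using Definition~\ref{def:IEC}. I rewrite $\deltaEC(\psi_1,\psi_2)$ from Definition~\ref{def:deltaec} with the order of summation swapped, so that the outer sum runs over $v\in V_2$ and the inner sum over $v'\in V_1$. Since edges are unordered pairs, $\{v',v\}=\{v,v'\}$ and $\{\psi_1(v'),\psi_2(v)\}=\{\psi_2(v),\psi_1(v')\}$, so the summands agree.

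Grouping the terms by $v\in V_2$ gives $\NEC(v,\psi_2(v)) + \sum_{v'\in V_1}\EEC(\{v',v\},\{\psi_1(v'),\psi_2(v)\})$. This is exactly $\ANEC{\psi_1}(v,\psi_2(v))$ by Definition~\ref{def:anec}, which applies because $\psi_2(v)\in\impsitwo$. Summing over $v\in V_2$ and adding $\EEC(\psi_2)$ yields $\AEC{\psi_1}(\psi_2)$.

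For the second identity, I apply the same argument with the roles of $V_1,\psi_1$ and $V_2,\psi_2$ exchanged. Two facts make this work. First, Lemma~\ref{lemma:delatec} is symmetric in the partition. Second, $\deltaEC(\psi_1,\psi_2)=\deltaEC(\psi_2,\psi_1)$, again because the pairs are unordered.

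None of these steps is a real obstacle; the only care needed is the bookkeeping of unordered pairs when swapping the order of summation.
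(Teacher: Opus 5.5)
Your proof is correct and takes the same approach as the paper, which gets the observation by substituting the definition of $\AEC{\psi_1}(\psi_2)$ into Lemma~\ref{lemma:delatec}. You spell that substitution out in full, including the swap of summation order over unordered pairs and the symmetric case.
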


Using this observation, the restricted GED from Definition \ref{def:restrictedGED} can be reformulated so that $\psi$ no longer appears as part of the optimization.

\begin{lemma}\label{lemma:restricted_ged_to_induced_costs}
    Let $U \subseteq V$ and $\psi\in\inj{U}{W}$. Then: 
    \begin{equation*}
        \AGED{\psi}(G, H) = \IEC(\psi) + \GED(G - U, H - \impsi, \ANEC{\psi}, \EEC)
    \end{equation*}
\end{lemma}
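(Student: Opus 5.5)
We set $V_1 := U$, $V_2 := V\setminus U$, $W_1 := \impsi$ and $W_2 := W\setminus\impsi$, and we prove the identity by a bijective correspondence between the feasible sets on both sides, using Observation~\ref{observation:aec} to match the objective values. The feasible set on the left is $\{\phi\in\bij{V}{W} : \restr{\phi}{U}=\psi\}$. The feasible set of $\GED(G-U, H-\impsi, \ANEC{\psi}, \EEC)$ is $\bij{V_2}{W_2}$. Here $G-U$ and $H-\impsi$ have the same number of nodes $n-|U|$, since $\psi$ is injective, so this GED instance is well defined. Note that $\ANEC{\psi}$ is defined on $V_2\times W_2$ and does not depend on the choice of the remaining map: in Definition~\ref{def:anec} it only uses $\psi_1=\psi$ and the sets $V_1,V_2$. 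The image $\im(\psi_2)=W_2$ is forced by bijectivity.

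The first step establishes the correspondence. The map $\phi\mapsto\restr{\phi}{V_2}$ sends every feasible $\phi$ on the left to a bijection $V_2\to W_2$, because $\phi$ is a bijection extending $\psi$. Conversely, every $\chi\in\bij{V_2}{W_2}$ glues with $\psi$ to a bijection $\phi=\psi\cup\chi\in\bij{V}{W}$ with $\restr{\phi}{U}=\psi$. These two operations are mutually inverse.

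The second step compares costs. For a feasible $\phi$ with $\chi=\restr{\phi}{V_2}$, Observation~\ref{observation:aec} gives $\IEC(\phi)=\IEC(\psi)+\AEC{\psi}(\chi)$. We then check that $\AEC{\psi}(\chi)$ is exactly the induced edit cost of $\chi$ in the instance $(G-U,H-\impsi,\ANEC{\psi},\EEC)$ in the sense of Definition~\ref{def:IEC}. The node part is $\sum_{v\in V_2}\ANEC{\psi}(v,\chi(v))$. The edge part is $\EEC(\chi)=\sum_{e\in\PE{V_2}}\EEC(e,\chi[e])$, and this uses the restriction of $\EEC$ to pairs inside $V_2$ and $W_2$, which agrees with the edge costs of the induced subgraphs $G-U$ and $H-\impsi$.

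Finally, we take the minimum over the bijective correspondence. Since $\IEC(\psi)$ is constant, it can be pulled out of the minimum, which yields the claimed equality.

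The only delicate point is bookkeeping. One must make sure that Definition~\ref{def:anec} and Observation~\ref{observation:aec} apply with the roles $V_1=U$ and $\psi_1=\psi$. One must also check that edge costs between nonadjacent pairs, which are zero, or pairs representing insertions and deletions, are consistent between the full instance and the subinstance. Both hold directly from the definitions, since $\EEC$ is defined on all pairs and restricts unchanged.
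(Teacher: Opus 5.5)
Your proof is correct and follows the paper's argument: you apply Observation~\ref{observation:aec} with $\psi_1=\psi$ and $\psicomplement=\restr{\phi}{V\setminus U}$, rewrite the minimum over extensions $\phi$ of $\psi$ as a minimum over $\psicomplement\in\bij{V\setminus U}{W\setminus\impsi}$, and pull out the constant $\IEC(\psi)$. The paper's proof does exactly this; you only spell out the bijective correspondence between extensions and remainder maps, and the identification of $\AEC{\psi}(\psicomplement)$ with the induced cost in the subinstance, both of which the paper leaves implicit.
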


\begin{proof}
    We have:
    \begin{align*}
        \AGED{\psi}(G,H) 
        = &\minback{\phi\in\bij{V}{W}, \restr{\phi}{U}=\psi}{\IEC(\phi)} \\
        = &\minback{\phi\in\bij{V}{W}, \restr{\phi}{U} = \psi}{\IEC(\psi) + \AEC{\psi}(\psicomplement)}\text{ where }\psicomplement := \restr{\phi}{V\setminus U} \\
        = &\IEC(\psi) + \minback{\psicomplement\in\bij{V\setminus U}{W\setminus \impsi}}{ \AEC{\psi}(\psicomplement)}\\
        = &\IEC(\psi) + \GED(G - U, H - \impsi, \ANEC{\psi}, \EEC)
        \qedhere
    \end{align*}
\end{proof}

For a fixed $U \subseteq V$, the GED can be reformulated by enumerating all embeddings of $U$ and computing the corresponding restricted GED.

\begin{observation}
     For arbitrary $U\subseteq V$, we have
    \[
        \GED(G,H) = \minback{\psi \in \inj{U}{W}}{\AGED{\psi}(G,H)}.
    \]
\end{observation}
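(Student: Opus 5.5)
The plan is to prove the two inequalities $\le$ and $\ge$ separately, using only the definitions of $\GED$ and of the restricted GED in Definition~\ref{def:restrictedGED}. We fix an arbitrary $U \subseteq V$ throughout.

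For ``$\le$'', take any $\psi \in \inj{U}{W}$. By Definition~\ref{def:restrictedGED}, $\AGED{\psi}(G,H)$ is the minimum of $\IEC(\phi)$ over a subset of $\bij{V}{W}$, namely those $\phi$ with $\restr{\phi}{U}=\psi$. This subset is nonempty: because $|V \setminus U| = |W \setminus \impsi|$, the map $\psi$ extends to a bijection by pairing $V\setminus U$ arbitrarily with $W \setminus \impsi$. A minimum over a nonempty subset is at least the minimum over the whole set, so $\GED(G,H) \le \AGED{\psi}(G,H)$. Taking the minimum over $\psi$ gives the first inequality.

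For ``$\ge$'', let $\phi^*$ be an optimum node map, which exists since $\bij{V}{W}$ is finite. Set $\psi^* := \restr{\phi^*}{U}$. It is injective and lies in $\inj{U}{W}$. Since $\phi^*$ is feasible for the minimization defining $\AGED{\psi^*}(G,H)$, we get
\[
\AGED{\psi^*}(G,H) \le \IEC(\phi^*) = \GED(G,H),
\]
and hence the minimum over all $\psi$ is at most $\GED(G,H)$.

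The only step needing care is the extension argument in the first part, which guarantees that each restricted minimum is over a nonempty set. This follows from $|V|=|W|$. Everything else is immediate from the definitions; equivalently, the statement says that $\bij{V}{W}$ is the disjoint union over $\psi \in \inj{U}{W}$ of the fibres $\set{\phi : \restr{\phi}{U}=\psi}$.
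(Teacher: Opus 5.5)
Your proof is correct and matches the paper, which states this as an observation without a written proof. The implicit justification there is the one you give at the end: $\bij{V}{W}$ is partitioned by restriction to $U$ into the fibres $\set{\phi : \restr{\phi}{U}=\psi}$, and each fibre is nonempty because $|V|=|W|$.
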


Applying Lemma~\ref{lemma:restricted_ged_to_induced_costs} then yields:
\begin{corollary} \label{lemma:ged-embed}
    Let $U \subseteq V$ be arbitrary. We have
    \begin{equation*}
        \GED(G,H) = \minback{\psi \in \inj{U}{W}}{\IEC(\psi) + \GED(G - U, H - \impsi, \ANEC{\psi}, \EEC)} .
    \end{equation*}
\end{corollary}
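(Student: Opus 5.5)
The plan is to combine the preceding Observation with Lemma~\ref{lemma:restricted_ged_to_induced_costs}, substituting term by term inside the minimum.

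\emph{Step 1: establish the Observation.} We show that, for arbitrary $U \subseteq V$, $\GED(G,H) = \min\{\AGED{\psi}(G,H) : \psi\in\inj{U}{W}\}$. The argument is a partition of the feasible set. Every bijection $\phi\in\bij{V}{W}$ restricts to an injection $\psi := \restr{\phi}{U} \in \inj{U}{W}$. Hence $\bij{V}{W}$ is the disjoint union, over $\psi\in\inj{U}{W}$, of the sets $\{\phi : \restr{\phi}{U}=\psi\}$.

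Each of these sets is nonempty. Since $|V\setminus U| = |W\setminus\impsi|$, any bijection between these complements extends $\psi$ to a full node map. A minimum over a union of finite nonempty sets equals the minimum of the minima over the parts, and by Definition~\ref{def:restrictedGED} these inner minima are exactly $\AGED{\psi}(G,H)$. This gives the Observation.

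\emph{Step 2: substitute.} For each fixed $\psi$, Lemma~\ref{lemma:restricted_ged_to_induced_costs} gives
\[
\AGED{\psi}(G,H) = \IEC(\psi) + \GED(G-U, H-\impsi, \ANEC{\psi}, \EEC).
\]
Replacing each term inside the minimum by this expression yields the claimed formula.

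\emph{Main obstacle.} The only delicate point is well-definedness. We must check that $\ANEC{\psi}$ is the adapted cost of Definition~\ref{def:anec} with $V_1 = U$ and $\psi_1 = \psi$, and that it is defined on $(V\setminus U)\times(W\setminus\impsi)$. This holds because the definition depends only on $\psi$ and not on the completion $\phi$. Once this is noted, the rest is bookkeeping, which is why the result is stated as a corollary.
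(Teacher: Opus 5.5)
Your proposal is correct and follows the paper's own route: the paper states the Observation $\GED(G,H)=\min\{\AGED{\psi}(G,H):\psi\in\inj{U}{W}\}$ and then substitutes Lemma~\ref{lemma:restricted_ged_to_induced_costs} termwise inside the minimum. You do the same, and your Step~1 justifies the Observation in more detail (partitioning $\bij{V}{W}$ by restriction to $U$, with each part nonempty), while your remark that $\ANEC{\psi}$ depends only on $\psi$ and not on the completion $\phi$ makes explicit a point the paper leaves implicit.
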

Two further properties allow Corollary~\ref{lemma:ged-embed} to be applied effectively:
\begin{enumerate}[(i)]
    \item The set $U$ is small enough that it is feasible to enumerate all $\psi \in \inj{U}{W}$.
    \item Computing $\GED(G - U, H - \impsi, \ANEC{\psi}, \EEC)$ on the remaining graph becomes easier.
\end{enumerate}
We later show that both of these properties are satisfied if we choose the set $U$ to be a sufficiently small balanced separator in $G$.

\begin{remark}
Adapted node edit costs can strengthen lower bounds for the GED when partial node maps are already fixed during a heuristic tree search such as \AStar.  
Typically, lower bounds for the cost of mapping the remaining nodes are computed by solving a minimum weight (perfect) matching problem
between nodes and their incident edges (``stars'') in $G$ and corresponding stars in $H$ \cite{apx-stars}.  
To tighten this bound, the partial mapping $\psi$ should be taken into account. Adapted node edit costs achieve exactly this. It suffices to substitute $\ANEC{\psi}$ for $\NEC$, while restricting edges to $\PE{V \setminus \dompsi}$ and $\PE{W \setminus \impsi}$.
\end{remark}

\subsection{Algorithm and Analysis}

\begin{algorithm}[t] 
\caption{SR-GED}\label{alg:sr-ged}
\begin{algorithmic}[1]
\Function{SR-GED}{$G$, $H$, $\NEC$, $\EEC$}
\If{$n=|V(G)| \le 1$} \Comment{in practice: larger threshold}
    \State \Return $\GED(G, H, \NEC, \EEC)$ \label{algolabel:enumeratesmall} \Comment{solve by enumeration}
\EndIf
\State Compute a small balanced separator $S \subseteq V(G)$, $S \neq \emptyset$ \label{algolabel:defs}
\State $G_r \gets G - S$
\State Let $V_1 \sqcup V_2 = V(G_r)$ be an induced balanced bipartition \label{algolabel:end-setup}
\State $u \gets \infty$ \Comment{upper bound for the $\GED$}
\ForAll{$\psi \in \inj{S}{V(H)}$} \label{algolabel:start_outer_for}
    \State $H_r \gets H - \impsi$ \label{algolabel:setup-psi}
    \State Compute the adapted node edit costs $\ANEC{\psi}$ \label{algolabel:adapt-nec}
    \State $c_\psi \gets \IEC(\psi)$ on $(G[S]$, $H[\impsi])$\Comment{induced edit costs on the separator} \label{algolabel:end-setup-psi}
    \ForAll{bipartitions $W_1 \sqcup W_2 = V(H_r)$, $|W_1| = |V_1|$} \label{algolabel:start_inner_for}
        \State $c_1 \gets \text{SR-GED}(G_r[V_1], H_r[W_1], \ANEC{\psi}, \EEC)$\label{algolabel:ged1}
        \State $c_2 \gets \text{SR-GED}(G_r[V_2], H_r[W_2], \ANEC{\psi}, \EEC)$\label{algolabel:ged2}
        \State $c_\mathrm{cut}^{+} \gets \deltaECplus(W_1)$ \Comment{cost of inserting edges on the $W_1$--$W_2$ cut in $H_r$} \label{algolabel:compute-cut}
        \State $c \gets c_\psi + c_1 + c_2 + c_\mathrm{cut}^{+}$ \label{algolabel:costsum}
        \State $u \gets \min\{u, c\}$ \label{algolabel:defu}
    \EndFor \label{algolabel:end_inner_for}
\EndFor\label{algolabel:end_outer_for}
\State \Return $u$
\EndFunction
\end{algorithmic}
\end{algorithm}

The algorithm is a recursive function that computes the GED for an instance given by graphs $G$ and $H$ and the node and edge edit cost matrices $\NEC$ and $\EEC$.
We first check whether the problem has small constant size, in which case it can be solved by enumeration in constant time (l.\ \ref{algolabel:enumeratesmall}).
Otherwise, we decompose the problem into exponentially many subproblems.
For this, we first find a separator $S$ in $G$ (l.\ \ref{algolabel:defs}), for example by exhaustive search. To guarantee termination and avoid some edge cases, we demand that $S$ be nonempty. 
The removal of $S$ yields a bipartition of nodes $V_1 \sqcup V_2 = V(G) \setminus S$.
We will later show that this bipartition can be chosen such that it is sufficiently balanced (l.\ \ref{algolabel:end-setup}).

Then, in the outer for loop (l.\ \ref{algolabel:start_outer_for}--\ref{algolabel:end_outer_for}) we try all possible ways of injectively mapping the nodes in the separator to nodes in $H$.
For each such embedding, we obtain a smaller GED problem on the remaining subgraph $G_r$ of $G$ without the separator, resp.\ $H_r$ of $H$ without the image of the separator.
To account for the edges in the cut of the separator (resp.\ image of the separator),
we adapt the node edit costs depending on the embedding (l.\ \ref{algolabel:adapt-nec}).

As $S$ is a separator, $G_r$ is the union of two disjoint subgraphs $G[V_1]$ and $G[V_2]$, which must be mapped optimally to vertex-disjoint subgraphs of $H_r$.
This is done by enumerating all bipartitions of nodes of $H_r$ in the inner for loop (l.\ \ref{algolabel:start_inner_for}--\ref{algolabel:end_inner_for}).
For a fixed bipartition $W_1\sqcup W_2=V(H_r)$, we then obtain two smaller GED problems on $(G_r[V_1], H_r[W_1])$ and $(G_r[V_2], H_r[W_2])$ which are solved by two recursive calls (l.\ \ref{algolabel:ged1}--\ref{algolabel:ged2}). We will later show that this reduction is correct.
Because there are no edges in the cut of $V_1$ and $V_2$,
we only need to sum the insertion costs of edges in the cut of $W_1$ and $W_2$ (l.\ \ref{algolabel:compute-cut}).

The induced edit costs $\IEC(\phi)$ for a mapping $\phi$ from $G$ to $H$ such that $\restr{\phi}{S} = \psi$ and $\phi[V_1] = W_1$, $\phi[V_2] = W_2$
are then given by the sum of the induced edit costs on the separator, the cost of edges in the cut,
and the minimum cost of mapping the two subgraphs of $G$ to the corresponding subgraphs of $H$ (l.\ \ref{algolabel:costsum}).
The running minimum $u$ over all of these thus gives a gradually improving upper bound,
which becomes exact once all embeddings and bipartitions have been tried.

\subparagraph*{Correctness.}
If we show that the inner for loop in lines \ref{algolabel:start_inner_for}--\ref{algolabel:end_inner_for} of Algorithm~\ref{alg:sr-ged} computes $\GED(G_r, H_r, \ANEC{\psi}, \EEC)$, the correctness of the algorithm follows from Corollary~\ref{lemma:ged-embed}.

First, we note that if $|V(G)|=|V(H)|$, then also $|V(G_r)|=|V(H_r)|$. Since $G_r$ is obtained by removing a separator from $G$, we focus on the interaction of partial node maps whose domains are not connected by any edges. 

Let $V_1\sqcup V_2 =V(G_r)$ be such that $\delta_{G_r}(V_1)=\emptyset=\delta_{G_r}(V_2)$, and let $\psi_i\in\bij{V_i}{W_i}$ for $i=1,2$ be partial node maps such that $W_1 \sqcup W_2=V(H_r)$. 
Then, $\deltaEC(\psi_1, \psi_2)$ only accounts for inserting all edges between $W_1$ and $W_2$.
In particular, it follows that $\deltaEC$ is independent of the node maps themselves.
We thus denote 
\begin{align*}
        \deltaECplus(W_i) := \sum_{w_1\in W_1}\sum_{w_2\in W_2}\EEC(\emptyset , \{ w_1, w_2 \}) , 
\end{align*} 
and get $\deltaEC(\psi_1, \psi_2)=\deltaECplus(W_i)$.

Similar to Lemma~\ref{lemma:restricted_ged_to_induced_costs}, this lets us express $\AGED{\psi_1}(G, H)$ as a GED on the remaining graph plus the cut edit costs. Since these depend only on $W_1=\impsione$, the full GED reduces to minimizing over all partitions $W_1 \sqcup (V(H_r)\setminus W_1)$ with optimum partial mappings and the corresponding cut edit costs.

\begin{lemma}[Combining GEDs of disjoint subgraphs] \label{lemma:ged-split}
    Let $H=(W,F)$ be an arbitrary graph, let $G$ be the disjoint union of two subgraphs $G_1$ and $G_2$, and let $V_i := V(G_i)$ for $i=1,2$. Then:
    \begin{equation*}
        \GED(G, H) = \mindown{W_1 \in \binom{W}{|V_1|}}{\GED(G_1, H[W_1]) + \GED(G_2, H - W_1) + \deltaECplus(W_1)}
    \end{equation*}
\end{lemma}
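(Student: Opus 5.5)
The plan is to partition the set of all node maps $\phi\in\bij{V}{W}$ according to the image $W_1:=\phi[V_1]$, which ranges exactly over $\binom{W}{|V_1|}$. Writing $\psi_i:=\restr{\phi}{V_i}$, every $\phi$ corresponds bijectively to a triple $(W_1,\psi_1,\psi_2)$ with $\psi_1\in\bij{V_1}{W_1}$ and $\psi_2\in\bij{V_2}{W\setminus W_1}$. This turns the minimum over $\phi$ into a minimum over $W_1$ of an inner minimum over the pairs $(\psi_1,\psi_2)$.

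Next I apply Lemma~\ref{lemma:delatec} to obtain $\IEC(\phi)=\IEC(\psi_1)+\IEC(\psi_2)+\deltaEC(\psi_1,\psi_2)$. Since $G$ is the disjoint union of $G_1$ and $G_2$, no pair $\{v_1,v_2\}$ with $v_i\in V_i$ lies in $E$. By the case distinction in the definition of $\EEC$, each summand of $\deltaEC$ is therefore either $\EEC(\emptyset,\{\psi_1(v_1),\psi_2(v_2)\})$ when that pair lies in $F$, or $0$ otherwise. As $(v_1,v_2)$ ranges over $V_1\times V_2$, the pair $(\psi_1(v_1),\psi_2(v_2))$ ranges bijectively over $W_1\times(W\setminus W_1)$. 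Hence $\deltaEC(\psi_1,\psi_2)=\deltaECplus(W_1)$, and this quantity depends only on $W_1$, as noted just before the lemma.

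For fixed $W_1$ the objective is then $\IEC(\psi_1)+\IEC(\psi_2)+\deltaECplus(W_1)$, which separates into a term depending only on $\psi_1$, a term depending only on $\psi_2$, and a constant. The two bijection sets are independent, so the inner minimum is $\min_{\psi_1}\IEC(\psi_1)+\min_{\psi_2}\IEC(\psi_2)+\deltaECplus(W_1)$. It remains to identify $\IEC(\psi_1)$, computed within the instance $(G,H)$, with the induced edit cost of $\psi_1$ viewed as a node map from $G_1$ to $H[W_1]$. The node costs agree because $\NEC$ is simply restricted. The edge term sums over $\PE{V_1}$, and each summand $\EEC(e,\psi_1[e])$ is determined by whether $e\in E(G_1)=E\cap\PE{V_1}$ and $\psi_1[e]\in F\cap\PE{W_1}=E(H[W_1])$, so those costs agree as well. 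The same argument applies to $\psi_2$ and $H-W_1$. Each inner minimum therefore equals the corresponding GED, which yields the formula.

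The main obstacle is only the bookkeeping that the cut term reduces to pure insertion costs independent of the maps, which relies on the convention $\EEC(e,f)=0$ when neither edge is present. The rest is a routine exchange of minima.
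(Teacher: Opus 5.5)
Your proposal is correct and follows essentially the same argument as the paper: decompose $\IEC(\phi)$ via Lemma~\ref{lemma:delatec}, note that $\deltaEC(\psi_1,\psi_2)=\deltaECplus(W_1)$ because $G$ has no $V_1$--$V_2$ edges, group the maps by their image $W_1$, and split the inner minimum into independent minima over $\psi_1$ and $\psi_2$. The only difference is the order of the steps: the paper first fixes $\psi_1$, evaluates $\AGED{\psi_1}(G,H)$, and only then groups by $W_1$, whereas you group by $W_1$ from the start. Your explicit check that $\IEC(\psi_1)$ equals the induced cost in the instance $(G_1,H[W_1])$ spells out a step the paper leaves implicit.
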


\begin{proof}
For a fixed $\psi_1\in\inj{V_1}{W}$, we have:
\begin{align*}
    \AGED{\psi_1}(G, H) &= \IEC(\psi_1) + \minback{\psi_2 \in \bij{V_2}{W \setminus \impsione}}{\IEC(\psi_2) + \deltaEC(\psi_1, \psi_2)}\\
    &= \IEC(\psi_1) + \deltaECplus(\impsione) + \minback{\psi_2 \in \bij{V_2}{W \setminus \impsione}}{\IEC(\psi_2)} \\
    &= \IEC(\psi_1) + \deltaECplus(\impsione) + \GED(G_2, H - \impsione)
\end{align*}

Crucially, the cut edit costs do not depend on $\psi_1$ since $G_1$ and $G_2$ are disjoint. Therefore, we can minimize over all possible images $W_1 := \impsione$ and obtain:
\begin{align*}
    \GED(G, H) 
    &= \mindown{\psi_1 \in \inj{V_1}{W}}{\AGED{\psi_1}(G, H)}\\
    &= \mindown{\psi_1 \in \inj{V_1}{W}}{\IEC(\psi_1) + \deltaECplus(\impsione) + \GED(G_2, H - \impsione)} \\
    &= \mindown{W_1 \in \binom{W}{|V_1|}} {\mindown{\psi_1 \in \bij{V_1}{W_1}}{\IEC(\psi_1) + \deltaECplus(W_1) + \GED(G_2, H - W_1)}} \\
    &= \mindown{W_1 \in \binom{W}{|V_1|}}{ \mindown{\psi_1 \in \bij{V_1}{W_1}} {\IEC(\psi_1)} + \deltaECplus(W_1) + \GED(G_2, H - W_1) } \\
    & = \mindown{W_1 \in \binom{W}{|V_1|}}{\GED(G_1, H[W_1]) + \GED(G_2, H - W_1) + \deltaECplus(W_1)} 
    \qedhere
\end{align*}
\end{proof}

This lemma shows that the inner for loop in lines \ref{algolabel:start_inner_for}--\ref{algolabel:end_inner_for} of Algorithm~\ref{alg:sr-ged} correctly computes $\GED(G_r, H_r, \ANEC{\psi}, \EEC)$. We now combine it with Corollary~\ref{lemma:ged-embed} and an induction over the recursion tree to show the correctness of the algorithm.
\newpage

\begin{theorem}[Correctness]
SR-GED computes the $\GED$.
\end{theorem}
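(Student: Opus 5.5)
We argue by strong induction on $n = |V(G)| = |V(H)|$, proving that for every instance $(G, H, \NEC, \EEC)$ with $|V(G)| = |V(H)| = n$, every node cost matrix $\NEC$ (in particular the adapted costs arising in recursive calls), and every edge cost matrix $\EEC$, the call SR-GED$(G,H,\NEC,\EEC)$ returns $\GED(G,H,\NEC,\EEC)$. The statement has to be quantified over arbitrary node costs, because recursive calls are made with $\ANEC{\psi}$ rather than $\NEC$. For the base case $n \le 1$ the algorithm solves the instance by enumeration in line~\ref{algolabel:enumeratesmall}, which is correct by definition.

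For the inductive step ($n \ge 2$), fix the nonempty separator $S$ and the bipartition $V_1 \sqcup V_2 = V(G_r)$ with no edges between $V_1$ and $V_2$. Since $S \neq \emptyset$, both $G_r[V_1]$ and $G_r[V_2]$ have fewer than $n$ nodes. Moreover $|W_1| = |V_1|$ and $|W_2| = |V(H_r)| - |V_1| = |V_2|$ (using $|V(G_r)| = |V(H_r)|$), so each recursive call in lines~\ref{algolabel:ged1}--\ref{algolabel:ged2} is on an equal-size instance of size less than $n$. The induction hypothesis therefore gives
\[ c_1 = \GED(G_r[V_1], H_r[W_1], \ANEC{\psi}, \EEC), \qquad c_2 = \GED(G_r[V_2], H_r[W_2], \ANEC{\psi}, \EEC). \]
If one $V_i$ is empty we read the corresponding GED as $0$; this case is covered by the same argument.

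Now fix $\psi$. Apply Lemma~\ref{lemma:ged-split} to $G_r$, the disjoint union of $G_r[V_1]$ and $G_r[V_2]$, and to $H_r$ with node costs $\ANEC{\psi}$. The lemma holds for arbitrary node costs, since its proof uses only Lemma~\ref{lemma:delatec} and the fact that the cut contains only insertions. It shows that the minimum over the inner loop of $c_1 + c_2 + c_\mathrm{cut}^{+}$ equals $\GED(G_r, H_r, \ANEC{\psi}, \EEC)$, because the inner loop ranges over exactly the sets $W_1 \in \binom{V(H_r)}{|V_1|}$. Adding the constant $c_\psi = \IEC(\psi)$ and minimizing over all $\psi \in \inj{S}{V(H)}$ in the outer loop, Corollary~\ref{lemma:ged-embed} with $U = S$ shows that the returned $u$ equals $\GED(G,H,\NEC,\EEC)$.

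The main obstacle is the consistency of the cost bookkeeping. We must check that the quantity called $c_\psi$, namely the induced cost on $(G[S], H[\impsi])$, is exactly $\IEC(\psi)$ as used in Corollary~\ref{lemma:ged-embed}, and that the edge costs between $S$ and $V(G_r)$ are counted once, inside $\ANEC{\psi}$, and not again anywhere else. We must also confirm that the restrictions of $\ANEC{\psi}$ to $V_i \times W_i$ are precisely the node costs under which the subinstances are posed. Once this bookkeeping is verified, the theorem follows from the chain of equalities above.
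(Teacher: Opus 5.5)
Your proposal is correct and follows the paper's proof. Both argue by induction on the subproblem size, use Lemma~\ref{lemma:ged-split} to identify the inner loop's minimum with $\GED(G_r, H_r, \ANEC{\psi}, \EEC)$, and finish with Corollary~\ref{lemma:ged-embed} for $U = S$; your explicit quantification over arbitrary node costs and your equal-size and empty-part checks are welcome extra care, and the bookkeeping you flag at the end is no real obstacle, since $c_\psi = \IEC(\psi)$ holds by the definition of induced costs and the single counting of the $S$--$V(G_r)$ edges inside $\ANEC{\psi}$ is exactly Observation~\ref{observation:aec} as used in Lemma~\ref{lemma:restricted_ged_to_induced_costs}, which Corollary~\ref{lemma:ged-embed} already builds on.
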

\begin{proof}
Let $(G,H,\NEC,\EEC)$ be a (sub)problem of size $n=|V(G)|=|V(H)|$. 
If $n \le 1$, the $\GED$ is computed directly in line \ref{algolabel:enumeratesmall}.
Otherwise, we assume that for all subproblems of size less than $n$ the algorithm computes the correct value.
Then, from $S \neq \emptyset$ in line \ref{algolabel:defs} it follows that the subproblems in lines \ref{algolabel:ged1} and \ref{algolabel:ged2} are solved correctly.
As $G_r$ is the union of two disjoint subgraphs, we can apply Lemma \ref{lemma:ged-split}, and get
\begin{equation*}
    \GED(G_r,H_r,\ANEC{\psi}, \EEC) = \mindown{W_1\in\binom{V(H_r)}{|V_1|}}{c_1 + c_2 + \deltaECplus(W_1)}.
\end{equation*}

We enumerate both the separator mappings $\psi$ in the outer loop (l.\ \ref{algolabel:start_outer_for}--\ref{algolabel:end_outer_for}), as well as the images $W_1$ of the node maps on $V_1$ in the inner loop (l.\ \ref{algolabel:start_inner_for}--\ref{algolabel:end_inner_for}), and then only carry over the minimum value in line \ref{algolabel:defu}. Therefore, we have
\begin{align*}
    u &= \mindown{\psi\in\inj{S}{V(H)}}{\mindown{W_1\in\binom{V(H-\impsi)}{|V_1|}}{c_\psi + c_1 + c_2 +\deltaECplus(W_1)}}.
\end{align*}

Further, $\IEC(\psi)$ is independent of the choice of $W_1$, and applying Corollary \ref{lemma:ged-embed} then yields
\begin{equation*}
    u 
    = \mindown{\psi\in\inj{S}{V(H)}}{\IEC(\psi) + \GED(G_r, H_r, \ANEC{\psi}, \EEC)}
    = \GED(G,H) . \qedhere
\end{equation*}
\end{proof}

\subparagraph*{Runtime Analysis.}

Note that SR-GED exploits the structure of only one of the two input graphs; we take this to be $G$, leaving $H$ arbitrary. We now give a general runtime formula for SR-GED when at least one of the input graphs is \alphasbalsep{\alpha}{s}.

\begin{theorem}[Runtime] \label{thm:runtime}
Let $s: \mathbb{N} \to \mathbb{R}_{\ge 0}$ be a non-decreasing function and $\alpha \in \left[ \frac{1}{2},1 \right)$.
Let $T(n)$ denote the worst-case runtime of SR-GED over all pairs of graphs $(G, H)$ of size $|V(G)| = |V(H)| \le n$ such that $G$ is \alphasbalsep{\alpha}{s}. 
Let $n_i:=\floor{\alpha^i n}$, and let $d:=\floor{\log_{\frac{1}{\alpha}} n}$ be the depth of the recursion.
Then:
\begin{equation*}
    T(n) = \expO{2^{\frac{1}{1-\alpha}n} \cdot \prod_{i=0}^d n_i^{s(n_i)}} 
\end{equation*}
\end{theorem}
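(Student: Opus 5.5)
We bound $T(n)$ through a recurrence obtained directly from the structure of Algorithm~\ref{alg:sr-ged}. Consider a call on $(G,H)$ of size $n$ with $G$ \alphasbalsep{\alpha}{s}. The separator in line~\ref{algolabel:defs} is found by exhaustive search over all subsets of size at most $s(n)$, checking for each whether $G-S$ admits a bipartition into two parts of size at most $\alpha n$ with no edges between them. This takes time $\bigO{n^{s(n)}}$ up to polynomial factors. Such an $S$ exists by assumption, and since $\alpha\ge \frac12$ and the components of $G-S$ can be grouped greedily, the induced balanced bipartition $V_1\sqcup V_2$ in line~\ref{algolabel:end-setup} exists with $|V_1|,|V_2|\le \alpha n$. (If the separator found happens to be empty, we add an arbitrary vertex; this keeps balance and costs only a polynomial factor.) The induced subgraphs $G[V_i]$ are again \alphasbalsep{\alpha}{s}, because the property is hereditary by definition. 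Hence the recursive calls fall under $T(\cdot)$ at size at most $\floor{\alpha n}=n_1$, using that $T$ is monotone by its definition as a worst case over sizes $\le n$.

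Counting work per call: the outer loop runs over $|\inj{S}{V(H)}|\le n^{|S|}\le n^{s(n)}$ maps, and the setup in lines~\ref{algolabel:setup-psi}--\ref{algolabel:end-setup-psi} is polynomial. The inner loop runs over at most $\binom{n-|S|}{|V_1|}\le 2^{n}$ bipartitions, each making two recursive calls plus polynomial work. This gives
\begin{equation*}
T(n)\le \mathrm{poly}(n)\cdot n^{s(n)}\cdot 2^{n}\cdot 2\,T(n_1) + \mathrm{poly}(n).
\end{equation*}

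We unroll this recurrence over the $d$ levels, using $n_{i+1}\le \floor{\alpha n_i}$ (true since $\floor{\alpha\floor{\alpha^i n}}\le\floor{\alpha^{i+1}n}$) and monotonicity of $s$ and $T$. After $d$ levels the size is $n_d\le 1$, so the base case is constant. The product of the terms $n_i^{s(n_i)}$ gives exactly $\prod_{i=0}^d n_i^{s(n_i)}$. The exponential terms give $2^{\sum_i n_i}\le 2^{n\sum_i\alpha^i}\le 2^{\frac{1}{1-\alpha}n}$.

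The step I expect to be the main obstacle is controlling the leftover multiplicative factors, $2^{d+1}$ and $\mathrm{poly}(n_i)^{d+1}$. The factor $2^{d+1}=\bigO{n^{\log_{1/\alpha}2}}$ is polynomial and is absorbed by $\expObase$. The per-level polynomial factors multiply to $n^{\bigO{d}}=n^{\bigO{\log n}}$, which is quasi-polynomial, so they must be handled with more care. My plan is to absorb them into the separator product: each level already contributes a factor $n_i^{s(n_i)}$, and we may assume $s\ge 1$ because $S\neq\emptyset$. A polynomial factor $n_i^{c}$ at level $i$ is then at most $(n_i^{s(n_i)})^{c}$. That is only a constant power, so it does not vanish entirely. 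Alternatively, the exponent $\frac{1}{1-\alpha}$ has slack only when $\alpha^i n$ is not an integer, so we cannot rely on rounding. If neither route closes the gap cleanly, I would state that the $\expObase$ notation here suppresses factors of $n^{\bigO{\log n}}$ arising from the recursion depth. I would then check that, for strictly sublinear $s$, these factors are dominated by the subexponential product anyway, which is the regime in which the theorem is used.
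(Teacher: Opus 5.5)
\textbf{There is a genuine gap, and it is the one you flag yourself.} Several pieces of your argument are fine: the separator search, the hereditary property, the geometric bound on $\sum_i n_i$, and the observation that $2^{d+1}$ is polynomial. But the proof does not reach the stated bound. Your recurrence $T(n)\le \mathrm{poly}(n)\cdot n^{s(n)}\cdot 2^{n}\cdot 2T(n_1)+\mathrm{poly}(n)$ unrolls to an extra factor $\mathrm{poly}(n)^{d}=n^{\bigO{\log n}}$. The paper defines $\expObase$ to hide only polynomial factors, so this factor is not absorbed.

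Neither of your escapes works. Bounding $n_i^{c}$ by $(n_i^{s(n_i)})^{c}$ changes the exponent in $\prod_i n_i^{s(n_i)}$. Letting $\expObase$ hide quasi-polynomial factors changes the theorem. For constant $s$ (bounded treewidth), $\prod_i n_i^{s(n_i)}$ is itself only $n^{\Theta(\log n)}$, so the extra factor is of the same order as the term you are trying to bound.

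\textbf{The missing observation is that the polynomial overhead is additive, not multiplicative.} You over-counted when you moved $\mathrm{poly}(n)$ in front of $T(n_1)$. Each of the at most $n^{s(n)}2^{n}$ iterations does $\bigO{n^2}$ local work plus two recursive calls. The setup per $\psi$ and the separator search are also additive. So the honest recurrence is $T(n)\le n^{s(n)}\bigl(c\,n^3+2^{n}(c\,n^2+2T(n_1))\bigr)+c\,n^{s(n)+2}$.

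The paper then absorbs the additive terms into the recursive term. Since $T(n_{i+1})$ dominates $n_i^3$ up to a constant, one gets $T(n_i)\le C\cdot n_i^{s(n_i)}\,2^{n_i}\,T(n_{i+1})$ with $C$ a constant. The reasoning you already used for $2^{d+1}$ then shows $C^{d}=n^{\bigO{1}}$.

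Equivalently, you can sum the local work over the recursion tree:
\begin{itemize}
\item there are at most $\prod_{j<i}2\,n_j^{s(n_j)}2^{n_j}$ calls at level $i$;
\item each such call does $\mathrm{poly}(n)\cdot n_i^{s(n_i)}2^{n_i}$ local work;
\item every factor $2\,n_j^{s(n_j)}2^{n_j}$ is at least $1$ because $n_j\ge n_d\ge 1$.
\end{itemize}
Hence the total is at most $(d+1)\cdot\mathrm{poly}(n)\cdot 2^{d+1}\cdot 2^{\sum_i n_i}\prod_i n_i^{s(n_i)}$. The polynomial now appears once, summed over the levels, rather than multiplied across them. With this correction the rest of your argument goes through.

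Minor point: you write $n_{i+1}\le\floor{\alpha n_i}$, but you need, and your parenthetical actually proves, the reverse inequality $\floor{\alpha n_i}\le n_{i+1}$.
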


\begin{proof}
A call to SR-GED at depth $i$ enumerates $n_i \cdot (n_i-1) \cdot \ldots \cdot (n_i - s(n_i) + 1) \leq n_i^{s(n_i)}$ separator embeddings. 
Each of these results in up to $2^{n_i}$ pairs of recursive calls. The subproblem in each of these calls has size at most $\floor{\alpha n_i} \le n_{i+1}$.
We now bound the costs incurred during the reduction steps:
\begin{enumerate}
    \item Since the size bound $s(n_i)$ is known, finding a suitable separator is possible in $\bigO{n_i^{s(n_i) + 2}}$ time by enumerating all candidate sets for separators up to the given size, for each determining the set of connected components in $\bigO{n_i^2}$ time and then partitioning said connected components optimally in $\bigO{n_i^2}$ time (via the standard dynamic program for Subset Sum).
    This accounts for lines \ref{algolabel:defs}--\ref{algolabel:end-setup}.
    \item Given a fixed separator mapping $\psi$, computing $\IEC(\psi)$ and the adapted node edit costs $\ANEC{\psi}$ is possible in $\bigO{s(n_i) \cdot n_i^2} = \bigO{n_i^3}$ time, as is computing $H_r$, accounting for lines \ref{algolabel:setup-psi}--\ref{algolabel:end-setup-psi}.
    \item Given a fixed bipartition $W_1 \sqcup W_2 = V(H_r)$, restricting $G_r$ and $H_r$, as well as computing the cost of inserting edges in the cut, is possible in $\bigO{n_i^2}$. This covers the cost of lines \ref{algolabel:ged1}--\ref{algolabel:compute-cut}, except for the cost of the two recursive GED calls.
\end{enumerate}

We obtain the following recursion, where $c_1,c_2$ and $c_3$ are the constant runtime factors for the steps above, and $C',C$ are constants chosen sufficiently big:
\begin{align*}
    T(n_i) 
    &\leq c_1 \cdot n_i^2 n_i^{s(n_i)}  + n_i^{s(n_i)} \left( c_2 \cdot n_i^3 + 2^{n_i} \left( c_3 \cdot n_i^2 + 2\cdot T(\floor{\alpha n_i}) \right) \right) \\
    &\leq  n_i^{s(n_i)} \cdot \left( (c_1 + c_2) \cdot n_i^3 + 2^{n_i+1} \cdot \left( c_3 \cdot n_i^2 +  T(\floor{\alpha n_i}) \right) \right) \\
    & \leq C' \cdot n_i^{s(n_i)} \cdot 2^{n_i+1} \cdot \left( c_3 \cdot n_i^3 +  T(\floor{\alpha n_i}) \right) \\
    &\leq C \cdot n_i^{s(n_i)} \cdot 2^{n_i} \cdot T(n_{i+1})
\end{align*}

The last inequality follows since $T(n_{i+1})$ dominates $c_3 \cdot n_i^3$. Effectively, we have shown that asymptotically, the recursive GED calls dominate the remaining work.
We have $T(0) = \bigO{1}$. By induction, we then obtain
\begin{equation*}
    T(n) = \bigO{\prod_{i=0}^d\left( C \cdot 2^{n_i}\cdot  n_i^{s(n_i)}\right)}
    = \bigO{C^d \cdot \left( 2^{\sum_{i=0}^d n_i} \right)\cdot \left(\prod_{i=0}^d n_i^{s(n_i)}\right)} .
\end{equation*}

We can bound $\sum_{i=0}^d n_i $ by the geometric series $\sum_{i=0}^\infty \alpha^i n = \frac{1}{1 - \alpha} n$. Therefore, 
\begin{equation*}
    T(n) = \bigO{C^d \cdot 2^{\frac{1}{1 - \alpha}n}\cdot \prod_{i=0}^d n_i^{s(n_i)}}.
\end{equation*}

Since $d=\bigO{\log n}$, the factor $C^d$ is polynomial and we get the desired result:
\begin{equation*}
     T(n) = \expO{2^{\frac{1}{1-\alpha}n} \cdot \prod_{i=0}^d n_i^{s(n_i)}}
     \qedhere
\end{equation*}
\end{proof}

Note that the first factor of the runtime depends only on the balance factor of the separators, while the second factor depends on their size. \medskip

Next, we show that any \alphasbalsep{\beta}{s} graph $G$ is in fact \alphasbalsep{\alpha}{c\cdot s} with a balance factor $\alpha$ arbitrarily close to $\tfrac{1}{2}$, while incurring only a constant-factor increase $c$ (depending on $\beta$ and the desired balance) in the separator size.
\begin{lemma}
\label{lemma:balance-sep}
Let $\mu \in \left(0, \frac{1}{2} \right)$ be arbitrary. Let $s: \mathbb{N} \to \mathbb{R}_{\ge 0}$ be a non-decreasing function, $\beta\in(0,1)$, and let $G$ be a \alphasbalsep{\beta}{s} graph.
Denote $\alpha := \frac{1}{2} + \mu$.

Then $G$ is also \alphasbalsep{\alpha}{c \cdot s}, where $c = 2^{\ceil{\log_\beta \mu}}$.
\end{lemma}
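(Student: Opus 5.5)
Fix an arbitrary $X \subseteq V(G)$ and write $m := |X|$. The plan is to build an $\alpha$-separator of $G[X]$ by repeatedly separating the currently larger side. We keep two disjoint ``bins'' $A, B \subseteq X$ and a separator set $S$, and maintain the invariant that $G[X]-S$ has vertex set $A \sqcup B \sqcup R$ with no edges between different parts. Here $R$ is a remaining ``pending'' piece. We start with $A = B = S = \emptyset$ and $R = X$.

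In each round we apply the $(\beta,s)$-balanced-separability hypothesis to the induced subgraph $G[R]$. This gives a separator $S_R \subseteq R$ with $|S_R| \le s(|R|) \le s(m)$, using that $s$ is non-decreasing. Removing it leaves two parts $R_1, R_2$, each of size at most $\beta|R|$. Since $S_R$ separates $R_1$ from $R_2$ inside $G[R]$, and $R$ already has no edges to $A$ or $B$, the invariant is preserved when we add $S_R$ to $S$. We then place the smaller of $R_1, R_2$ into whichever of the bins $A, B$ is currently smaller, and continue with the larger part as the new $R$.

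After $k$ rounds we have $|R| \le \beta^k m$ and $|S| \le k\, s(m)$. With $k := \ceil{\log_\beta \mu}$ we get $\beta^k \le \mu$, so $|R| \le \mu m$. We then put $R$ into the smaller bin and stop.

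The key balance estimate is the following. Whenever a piece $P$ is added to the smaller bin, the new size of that bin is at most $\frac{1}{2}(\text{total placed so far}) + |P|$. In every round the placed piece is the smaller of two parts, so $|P| \le |R_{\text{new}}|$, the larger remainder. By induction, the bin difference $\big||A|-|B|\big|$ stays bounded by the size of the current remainder $R$. At the end, placing the final $R$ of size at most $\mu m$ gives $\max\{|A|,|B|\} \le \frac{1}{2}m + \frac{1}{2}\mu m \le (\frac12+\mu)m = \alpha m$. The main technical point is getting this inductive ``difference $\le |R|$'' invariant right.

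The resulting size bound is $|S| \le k\,s(m) \le 2^{k} s(m) = c\,s(m)$, since $k \le 2^k$. This is weaker than what the argument gives but matches the stated constant. I expect the paper's $2^{k}$ comes from a recursive splitting of both sides, where the number of pieces doubles each level. If the sequential argument above had a gap, I would fall back to that doubling scheme. In it, every piece is split for $k$ levels, giving at most $2^k$ separators of size at most $s(m)$ and $2^k$ pieces each of size at most $\beta^k m \le \mu m$. The pieces are then packed greedily (largest first) into two bins, and the greedy bound gives $\max \le \frac{m}{2} + \frac{1}{2}\max_{\text{piece}} \le \alpha m$.
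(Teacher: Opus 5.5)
Your sequential peeling scheme works and is genuinely different from the paper's proof. However, the invariant you single out as the crux, $\bigl||A|-|B|\bigr|\le|R|$, is false. Separator vertices leave $R$ without entering either bin, so $|R|$ can shrink much faster than the imbalance. For instance, take $m=20$ with $G[X]$ two disjoint $10$-vertex paths, $\beta\ge\frac12$ and $s(10)\ge 6$. A first round with empty separator gives $|A|=10$, $|B|=0$, $|R|=10$. If the second round splits $G[R]$ by its middle $6$ vertices into two parts of size $2$, you end with $|A|=10$, $|B|=2$, $|R|=2$, so the difference $8$ exceeds $|R|$.

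The invariant that does hold is $\max\{|A|,|B|\}\le\frac m2$ throughout the loop, equivalently $\bigl||A|-|B|\bigr|\le|R|+|S|$. It follows from exactly the two facts you state. When $P$ enters the smaller bin, that bin's new size is at most $\frac12(|A|+|B|)+|P|\le\frac12(|A|+|B|+|P|+|R_{\mathrm{new}}|)\le\frac m2$, using $|P|\le|R_{\mathrm{new}}|$. At the end, $\min\{|A|,|B|\}+|R|\le\frac12(m-|R|)+|R|\le\frac12(1+\mu)m\le\alpha m$. So your final bound survives once the invariant is replaced.

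With that fix in place, here is how your route compares with the paper's. The paper splits every piece still larger than $\mu n$. This gives a binary recursion tree of depth $\ceil{\log_\beta\mu}$, hence up to $2^{\ceil{\log_\beta\mu}}$ separators. All final pieces then have size at most $\mu n$, and the standard greedy bound for small items finishes the argument. Your fallback is exactly this proof.

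Your main argument only ever splits the current remainder, so it uses $k=\ceil{\log_\beta\mu}$ separators. It therefore proves the lemma with $c=k$, which is logarithmic rather than polynomial in $1/\mu$. The price is that large pieces (up to $m/2$) are committed early. The small-items greedy bound is then unavailable, and you need the ``placed piece $\le$ remainder'' argument instead.

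Your version also works directly with the paper's two-part separator definition, avoiding its informal reduction to the case of two connected components. For Theorem~\ref{thm:exptime} the sharper constant is immaterial, since $c$ is a constant there either way.
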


\begin{proof}
Let $n = |V(G)|$. It suffices to find the desired separator in $G$: by the hereditary property, the same argument applies to all induced subgraphs of $G$.
Consider the following recursive scheme to produce a set of small connected components $G_1, \ldots, G_k$: Given a graph, we find a $\beta$-separator. If all connected components have size at most $\mu n$, we are done. Otherwise, we recurse on all connected components of size exceeding $\mu n$. The disjoint union of all separators found in this way then splits the graph into components of size at most $\mu n$.
The graphs $G_1, \ldots, G_k$ can then be partitioned into two sets containing at most $\alpha n$ many nodes using a simple greedy strategy.

For a very crude bound, we count how often we need to find a separator in this scheme.
Clearly, if removal of a separator causes a (sub)graph to decompose into multiple connected components, that cannot be worse compared to the case where we only get two connected components. So assume two connected components every time. The recursion tree then is a binary tree. The remaining connected components on the $i$-th level have size at most $\beta^i \cdot n$, therefore the tree has at most $l := \ceil{\log_\beta \mu}$ levels, giving a total of $c \leq 2^l$ nodes. 
\end{proof}

To achieve a truly exponential running time, the separator size bound $s(n)$ must be sufficiently small. This is summarized in the following theorem:
\begin{theorem} \label{thm:exptime}
Let $\varepsilon>0$ be arbitrary. Let $s: \mathbb{N} \to \mathbb{R}_{\ge 0}$ be a non-decreasing function, $\beta\in(0,1)$, and let $G$ be a \alphasbalsep{\beta}{s} graph. Denote $n=|V(G)|$. If $s(n) = \notbigO{\frac{n}{\log^2 n}}$, then SR-GED runs in time $\expO{(4 + \varepsilon)^n}$.
\end{theorem}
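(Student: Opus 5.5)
We plan to combine Lemma~\ref{lemma:balance-sep} with Theorem~\ref{thm:runtime}. Given $\varepsilon>0$, first pick $\mu\in(0,\frac12)$ small enough that with $\alpha:=\frac12+\mu$ we have $2^{\frac{1}{1-\alpha}} = 2^{\frac{2}{1-2\mu}} < 4+\varepsilon$. This is possible because $\frac{2}{1-2\mu}\to 2$ as $\mu\to 0$. Choose a slightly smaller target base, $2^{\frac{1}{1-\alpha}} \le 4+\varepsilon/2$, to leave room for a subexponential factor. By Lemma~\ref{lemma:balance-sep}, $G$ is \alphasbalsep{\alpha}{c\cdot s} with the constant $c=2^{\ceil{\log_\beta \mu}}$, which depends only on $\beta$ and $\varepsilon$. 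The function $c\cdot s$ is still non-decreasing and still $\notbigO{n/\log^2 n}$. Theorem~\ref{thm:runtime} with $\alpha\in[\frac12,1)$ then gives
\[
T(n)=\expO{2^{\frac{1}{1-\alpha}n}\cdot \prod_{i=0}^d n_i^{c\, s(n_i)}} .
\]

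It remains to show that the product is $2^{\notbigO{n}}$, so that it is absorbed into the gap between $4+\varepsilon/2$ and $4+\varepsilon$. Since $n_i\le n$ and $s$ is non-decreasing, we have $s(n_i)\le s(n)$. The product has $d+1=\bigO{\log n}$ factors. Hence
\[
\log_2 \prod_{i=0}^d n_i^{c\,s(n_i)} \le (d+1)\cdot c\, s(n)\log_2 n = \bigO{\log^2 n}\cdot \notbigO{\tfrac{n}{\log^2 n}} = \notbigO{n}.
\]
This crude bound is exactly where the hypothesis $s(n)=\notbigO{n/\log^2 n}$ is used: one $\log n$ comes from the recursion depth and one from $n_i^{s}=2^{s\log n_i}$. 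Consequently $\prod_i n_i^{c\,s(n_i)} = 2^{\notbigO{n}} \le \left(\frac{4+\varepsilon}{4+\varepsilon/2}\right)^n$ for all sufficiently large $n$. The total is therefore $\expO{(4+\varepsilon)^n}$, and small $n$ are absorbed into the constant.

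The main obstacle is bookkeeping rather than a real difficulty. We must make sure the constant $c$ from Lemma~\ref{lemma:balance-sep} does not depend on $n$, which holds because $\mu$ is fixed by $\varepsilon$ alone. We must also check that the $\notbigO{n}$ exponent is uniform, since $\varepsilon$ is fixed before $n\to\infty$. One further subtlety is that Theorem~\ref{thm:runtime} assumes the algorithm uses the size bound $c\cdot s$ when searching for separators. So we read SR-GED as being run with that bound, which is admissible because the theorem only requires that such separators exist and be found by enumeration.
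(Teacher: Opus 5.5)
Your proposal is correct and follows the paper's route. You fix $\mu$ depending only on $\varepsilon$ so that $\frac{1}{1-\alpha}$ is close to $2$, invoke the balancing lemma (a constant blow-up $c$) and the general runtime theorem, and bound $\prod_i n_i^{c\,s(n_i)} \le 2^{c(d+1)s(n)\log n} = 2^{o(n)}$, which is absorbed into the exponential slack. The only difference is cosmetic: the paper fixes $\lambda=\frac12\log\frac{4+\varepsilon}{4}$ and $\mu=\frac12-\frac{1}{2+\lambda}$ explicitly, spending one factor $2^{\lambda n}$ on the balance and one on the $2^{o(n)}$ term, whereas you split the slack as $4+\varepsilon/2$ versus $4+\varepsilon$.
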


\begin{proof}
Given $\varepsilon>0$, we define $\lambda= \frac{1}{2}\log(\frac{4+\varepsilon}{4})$ and $\mu = \frac{1}{2} - \frac{1}{2 + \lambda}$. By Lemma~\ref{lemma:balance-sep} we need to increase the size $s(n)$ by a factor $c_{\mu}=2^{\ceil{\log_\beta \mu}}$ to get a balanced $\alpha := (\frac{1}{2}+\mu)$ separator.
Let $d := \floor{\log_{\frac{1}{\alpha}} n}$ and $n_i := \floor{\alpha^i n}$. 

We now consider the two factors $\expO{2^{\frac{1}{1-\alpha}n}}$ and $\expO{\prod_{i=0}^d n_i^{c_{\mu}\cdot s(n_i)}}$ given in the running time bound (Theorem \ref{thm:runtime}) individually.
For the first factor we get
\begin{align*}
    \expO{2^{\frac{1}{1-\alpha} n}} 
    &=\expO{2^{\frac{1}{\frac{1}{2}-\mu} n}}  
    =\expO{2^{(2+\lambda) n}} 
    = \expO{4^n \cdot 2^{\lambda n}} .
\end{align*}

The second factor depends on the size of the separators. Since the $n_i$ are non-increasing, we have $n_i^{c_{\mu}\cdot s(n_i)} \leq n^{c_{\mu}\cdot s(n_i)}$. 
For ease of notation, we define $t := \sum_{i=0}^d s(n_i)$, and get
\begin{align*}
     \expO{\prod_{i=0}^d n_i^{c_{\mu}\cdot s(n_i)}}= \expO{n^{\sum_{i=0}^d c_{\mu}\cdot s(n_i)}}= \expO{n^{c_{\mu}\cdot t}}= \expO{2^{c_{\mu} t \log n}} .
\end{align*}

Since $s$ is non-decreasing and $s(n) = \notbigO{\frac{n}{\log^2 n}}$, we have
\begin{equation*}
    t = \sum_{i=0}^d s(n_i)\leq (d+1) s(n) = \left( \floor{\log_{\frac{1}{\alpha}} n} + 1 \right)s(n) = \bigO{\log n \cdot s(n)} = \notbigO{\frac{n}{\log{n}}}.
\end{equation*}

As $c_\mu$ is constant for fixed $\varepsilon$ and $\beta$, we also have $c_\mu t\log n= \notbigO{n}$. 
Using the fact that $\notbigO{n}$ is dominated by $\lambda\cdot n$ for any $\lambda > 0$, we get
\begin{equation*}
    T(n)=\expO{4^n \cdot 2^{\lambda n}\cdot{2^{\notbigO{n}}}}=\expO{4^n \cdot 2^{2\lambda n}}=\expO{(4 \cdot 2^{2\lambda })^n}.
\end{equation*}

Inserting the definition of $\lambda$ then yields the desired runtime depending on $\varepsilon$.
\end{proof}

\begin{corollary}
Let $\varepsilon > 0$ be arbitrary. Let $\beta\in (0,1)$, $s(n) = \bigO{n^{1-\delta}}$ for some fixed $\delta > 0$, and let $G$ be a \alphasbalsep{\beta}{s} graph
. Then SR-GED runs in time $\expO{(4 + \varepsilon)^n}$.
\end{corollary}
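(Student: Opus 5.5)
The corollary follows by applying Theorem~\ref{thm:exptime} directly. The only hypothesis to check is that $s(n) = \bigO{n^{1-\delta}}$ implies $s(n) = \notbigO{\frac{n}{\log^2 n}}$. All other hypotheses of Theorem~\ref{thm:exptime} are shared with the corollary: $\varepsilon>0$, $\beta\in(0,1)$, $G$ being \alphasbalsep{\beta}{s}, and $H$ arbitrary.

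One technicality is that Theorem~\ref{thm:exptime} assumes $s$ is non-decreasing, while the corollary only gives an asymptotic upper bound. I would resolve this by replacing $s$ with $s'(n) := C n^{1-\delta}$ for a constant $C$ chosen so that $s(n) \le s'(n)$ for all $n \ge 1$. Such a $C$ exists because $s(n)=\bigO{n^{1-\delta}}$ gives the bound for large $n$, and the finitely many small values of $n$ can be absorbed into $C$. Since separator sizes are integers, I would take $\ceil{s'}$ if needed, but this is harmless. Any graph that is \alphasbalsep{\beta}{s} is also \alphasbalsep{\beta}{s'}, because a larger size bound only weakens the requirement. The function $s'$ is non-decreasing, so the monotonicity hypothesis holds.

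The main step is the asymptotic comparison. I would compute
\[
\frac{s'(n)}{n/\log^2 n} = C\,\frac{\log^2 n}{n^{\delta}} \longrightarrow 0 \quad (n\to\infty),
\]
which holds because every fixed power of $\log n$ is dominated by every positive power of $n$. This gives $s'(n) = \notbigO{\frac{n}{\log^2 n}}$, and Theorem~\ref{thm:exptime} then yields a running time of $\expO{(4+\varepsilon)^n}$.

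None of these steps is a real obstacle. The only point needing care is the replacement of $s$ by a non-decreasing majorant, so that the theorem's monotonicity assumption is satisfied exactly as stated.
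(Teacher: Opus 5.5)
Your proposal is correct and matches the paper, which states this corollary without proof as an immediate consequence of Theorem~\ref{thm:exptime}, since $n^{1-\delta} = \notbigO{n/\log^2 n}$. Replacing $s$ by the non-decreasing majorant $s'(n) = C n^{1-\delta}$ is a sensible bit of care that the paper leaves implicit. One small caveat: $s'$ is non-decreasing only when $\delta \le 1$, so for $\delta > 1$ you should first replace $\delta$ by $1$ (making $s'$ constant), in line with the paper's convention $\delta \in (0,1]$ for strictly sublinear functions.
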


We can rephrase this corollary. A graph class $\mathcal{G}$ is called \emph{hereditary} if it holds for any $G \in \mathcal{G}$ and $U \subseteq V(G)$ that $G[U] \in \mathcal{G}$.

\begin{corollary} \label{cor:runtime-hereditary}
For every $\varepsilon > 0$, SR-GED computes $\GED(G, H)$ in time $\expO{(4 + \varepsilon)^n}$ if $G$ is a member of a hereditary graph class that admits strictly sublinear balanced separators.
\end{corollary}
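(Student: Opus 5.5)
The plan is to reduce Corollary~\ref{cor:runtime-hereditary} to the preceding corollary, i.e.\ to Theorem~\ref{thm:exptime} with $s(n)=\bigO{n^{1-\delta}}$. The only work is to unpack the hypothesis ``member of a hereditary graph class that admits strictly sublinear balanced separators'' into the statement that $G$ is \alphasbalsep{\beta}{s} for some fixed $\beta\in(0,1)$ and some non-decreasing $s$ with $s(n)=\bigO{n^{1-\delta}}$.

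I read ``admits strictly sublinear balanced separators'' as follows: there exist constants $\beta\in(0,1)$, $\delta\in(0,1]$, $C>0$ such that every graph $G'\in\mathcal{G}$ on $m$ nodes has a $\beta$-separator of size at most $Cm^{1-\delta}$. Set $s(m):=Cm^{1-\delta}$, which is non-decreasing. Now take any $G\in\mathcal{G}$ and any $X\subseteq V(G)$. Heredity gives $G[X]\in\mathcal{G}$, so $G[X]$ has a $\beta$-separator $S\subseteq X$ with $|S|\le s(|X|)$. Hence $G$ is \alphasbalsep{\beta}{s}.

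Two technicalities need care. First, the class definition might only guarantee separators in the multi-component sense, meaning $G'-S$ splits into components each of size at most $\beta m$. The paper's definition instead requires exactly two sides. I would convert between the two by greedily grouping components into two parts. The argument at the end of the proof of Lemma~\ref{lemma:balance-sep} gives two sides of size at most $\max\{\beta,\tfrac{2}{3}\}\,m$ after adjusting the constant, or I can simply invoke Lemma~\ref{lemma:balance-sep}, which already works from components. Second, $s(n)=\bigO{n^{1-\delta}}$ implies $s(n)=\notbigO{n/\log^2 n}$, because $n^{1-\delta}\log^2 n/n=\log^2 n/n^{\delta}\to 0$.

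With this in hand, Theorem~\ref{thm:exptime}, or equivalently the preceding corollary, gives a running time of $\expO{(4+\varepsilon)^n}$ for every $\varepsilon>0$. Correctness of the computed value follows from the Correctness theorem. The main obstacle is definitional: making precise what ``admits balanced separators'' means and confirming that the guarantee is uniform over all induced subgraphs. Heredity is exactly what supplies this uniformity; without it, the recursion of SR-GED could hit subgraphs that lack small separators.
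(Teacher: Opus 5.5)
Your proposal is correct and takes the same route as the paper. The paper states this corollary as a rephrasing of the preceding one, namely Theorem~\ref{thm:exptime} applied with $s(n)=\bigO{n^{1-\delta}}$ (hence $s(n)=\notbigO{n/\log^2 n}$), where heredity is exactly what gives every induced subgraph a small balanced separator; your extra step of greedily grouping multi-component separators into the paper's two-sided $\beta$-separators fills in a detail the paper leaves implicit without changing the argument.
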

Unless the ETH is false, this runtime result cannot be generalized to arbitrary graphs \cite{sgi-hardness}.

\subparagraph*{Memory Usage.}

SR-GED has worst case linear recursion depth, and logarithmic recursion depth if separators are balanced. In each step, the newly created data structures (restricted graphs, updated node edit costs) have size in $\bigO{n^2}$, thus the overall auxiliary memory usage of SR-GED is bounded by $\bigO{n^3}$ resp.\ $\bigO{n^2 \log{n}}$.

\section{Applications}

Straightforward applications are to all hereditary graph classes that admit suitable separators. We briefly recall a few relevant landmark results.

\begin{theorem}[Planar Separator Theorem, Lipton and Tarjan \cite{planar-separator}] \label{thm:planar-separator}
If $G$ is a planar graph, then $G$ is \alphasbalsep{\frac{2}{3}}{2\sqrt{2n}}.
\end{theorem}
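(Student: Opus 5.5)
The plan is to reduce the statement to the classical form of the Lipton--Tarjan theorem and then check that it matches the definition of \alphasbalsep{\alpha}{s} used in this paper. There are two points to check: the hereditary quantifier over all node-induced subgraphs, and the requirement that $G - S$ splits into \emph{exactly two}, possibly non-connected, parts.

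\emph{Step 1 (heredity).} Planarity is closed under taking subgraphs, so for every $X \subseteq V(G)$ the induced subgraph $G[X]$ is planar. It therefore suffices to show the following for every planar graph $G'$ on $m$ nodes: there is a $\frac{2}{3}$-separator $S \subseteq V(G')$ with $|S| \le 2\sqrt{2m}$. We then apply this with $G' = G[X]$ and $m = |X|$.

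\emph{Step 2 (classical theorem).} Lipton and Tarjan prove the following. The vertices of any $m$-vertex planar graph can be partitioned into sets $A, B, C$ such that no edge joins $A$ and $B$, $|A|, |B| \le \frac{2}{3}m$, and $|C| \le 2\sqrt{2}\sqrt{m} = 2\sqrt{2m}$. Setting $S := C$, $G_1 := G'[A]$ and $G_2 := G'[B]$ gives a decomposition of $G' - S$ into a disjoint union of two graphs. These two graphs need not be connected, which the definition explicitly allows. They satisfy $\max\{|V(G_1)|, |V(G_2)|\} \le \frac{2}{3}m$, so this is exactly an $\alpha$-separator with $\alpha = \frac{2}{3}$ and $|S| \le 2\sqrt{2m}$.

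If a self-contained argument is wanted instead of a citation, I would follow Lipton--Tarjan's proof, which proceeds in four stages.
\begin{enumerate}
\item Assume $G'$ is connected; otherwise handle components greedily, either packing small components into two sides or recursing into a component of size greater than $\frac{2}{3}m$.
\item Take a BFS tree and its levels $L_0, \dots, L_r$. Let $\ell_1$ be the median level. Choose levels $\ell_0 \le \ell_1 < \ell_2$ with $|L_{\ell_0}| + 2(\ell_1 - \ell_0) \le 2\sqrt{m}$ and $|L_{\ell_2}| + 2(\ell_2 - \ell_1 - 1) \le 2\sqrt{m}$, using a counting argument on level sizes.
\item If the middle part between $\ell_0$ and $\ell_2$ has at most $\frac{2}{3}m$ vertices, removing $L_{\ell_0} \cup L_{\ell_2}$ already suffices.
\item Otherwise, contract all levels up to $\ell_0$ to a single root, discard the levels from $\ell_2$ on, and triangulate. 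The resulting spanning tree has radius at most $\ell_2 - \ell_0$. The fundamental cycle lemma then gives a non-tree edge whose fundamental cycle has at most $2(\ell_2 - \ell_0) + 1$ vertices and separates the triangulated graph with balance $\frac{2}{3}$. Combine this cycle with $L_{\ell_0} \cup L_{\ell_2}$.
\end{enumerate}
The constants are optimized to give $2\sqrt{2}\sqrt{m}$.

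The main obstacle in a self-contained proof is the fundamental cycle lemma: in a triangulated plane graph with a spanning tree of radius $r$, some fundamental cycle has $\le 2r+1$ vertices and inside and outside each of weight $\le \frac{2}{3}$. I would prove it by the standard walk: start from any non-tree edge and, while the heavier side exceeds $\frac{2}{3}$, replace the edge by one of the two other edges of the adjacent triangle on the heavy side. Each step strictly reduces the enclosed weight, and since the cycle-weight changes are controlled, the walk must end in a balanced cycle. Since the paper invokes this theorem as a known landmark result, however, I would present the proof as Steps 1 and 2: cite the theorem and verify the definitional match.
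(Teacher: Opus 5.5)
Your proposal is correct and takes the same approach as the paper, which simply cites Lipton and Tarjan. Your two checks are all that is needed: planarity is hereditary, so the bound applies to each $G[X]$ with $m=|X|$, and the partition $A,B,C$ gives exactly the two possibly disconnected parts the $\alpha$-separator definition requires, with $2\sqrt{2}\sqrt{m}=2\sqrt{2m}$.

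One caveat concerns the optional self-contained sketch. With both level bounds set to $2\sqrt{m}$ you only get $|S|\le 4\sqrt{m}$. Lipton and Tarjan instead use $2\sqrt{k}$ and $2\sqrt{m-k}$, where $k$ is the number of vertices in levels $0,\dots,\ell_1$, and $\sqrt{k}+\sqrt{m-k}\le\sqrt{2m}$ is what yields the constant. Also, in the cycle walk the terminating measure should be the number of enclosed faces, not the enclosed weight: when the triangle's third vertex already lies on the cycle, the enclosed weight does not change.
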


This generalizes to graphs that do not contain complete graphs of a fixed size as minors:
\begin{theorem}[Alon, Seymour, Thomas \cite{nonplanar-separator}] \label{thm:minor-separator}
If $G$ is a graph that does not contain $K_h$ as a minor, then $G$ is \alphasbalsep{\frac{2}{3}}{c_h \cdot \sqrt{n}}, where $c_h$ is a constant depending on $h$.
\end{theorem}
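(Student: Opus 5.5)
We treat this as a black box from \cite{nonplanar-separator}, but the plan for establishing it in the form stated here has two layers. The first layer reduces the two-sided, hereditary statement to a one-graph statement. The second layer is the core separator argument of Alon, Seymour and Thomas.

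\textbf{Reduction.} Being $K_h$-minor-free is inherited by every induced subgraph $G[X]$, since a minor of $G[X]$ is a minor of $G$. So it suffices to show the following for every $K_h$-minor-free graph $G$ on $n$ vertices: there is a set $S$ with $|S| \le c_h\sqrt{n}$ such that $V(G)\setminus S$ splits into two edgeless-between parts, each of size at most $\frac{2}{3}n$.

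We would derive this from the form Alon--Seymour--Thomas actually prove. That form gives a set $X$ with $|X|\le h^{3/2}n^{1/2}$ such that every connected component of $G-X$ has at most $\frac{n}{2}$ vertices. We take $S:=X$ and $c_h:=h^{3/2}$, and group the components into two sides as follows.
\begin{itemize}
\item If some component has more than $\frac{n}{3}$ vertices, it alone forms $G_1$. Then $|V(G_1)|\le\frac{n}{2}\le\frac{2n}{3}$, and the rest has fewer than $\frac{2n}{3}$ vertices.
\item Otherwise every component has at most $\frac{n}{3}$ vertices. We add components to $G_1$ one at a time until $|V(G_1)|\ge\frac{n}{3}$ first holds. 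The last addition overshoots by at most $\frac{n}{3}$, so $|V(G_1)|\le\frac{2n}{3}$. The complement then has at most $n-\frac{n}{3}=\frac{2n}{3}$ vertices.
\end{itemize}
Since all edges of $G-S$ lie inside components, $G_1$ and $G_2$ are disjoint with no edges between them, as the definition requires.

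\textbf{Core argument (AST).} We maintain a family $\mathcal{C}=\{C_1,\dots,C_k\}$ of vertex-disjoint connected subgraphs that pairwise touch, meaning each pair is adjacent or intersecting. Since $G$ is $K_h$-minor-free, $k<h$; otherwise contracting the $C_i$ yields $K_h$.

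Let $B$ be the component of $G-\bigcup\mathcal{C}$ with more than $\frac{n}{2}$ vertices, if one exists. We ask whether $B$ contains a tree of at most $h^{1/2}n^{1/2}$ vertices touching every $C_i$.
\begin{itemize}
\item If yes, we add it to $\mathcal{C}$, which increases $k$.
\item If not, a Menger-type lemma applies. For a graph and vertex sets $A_1,\dots,A_k$ with no short connecting tree, it yields a small set $Y$, of size at most $k\,n^{1/2}/h^{1/2}$ or so, separating some pair of the sets. Since a component of $B-Y$ can then no longer touch all $C_i$, we use $Y$ to replace some $C_i$ by a smaller portion, which shrinks the total size of $\bigcup\mathcal{C}$.
\end{itemize}
A potential argument shows that the process terminates. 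At termination, no big component remains, and $X:=\bigcup\mathcal{C}$ has size at most $h\cdot h^{1/2}n^{1/2}$.

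\textbf{Main obstacle.} The hard part is the Menger-type lemma together with the potential argument: either a connecting tree with few vertices exists, or a separator of size roughly $n^{1/2}$ exists. This needs a careful layering (BFS-level) argument inside $B$, where some layer of size at most $\sqrt{n}$ must exist among about $\sqrt{n}$ consecutive layers. I would first try to prove it for $k=2$ via Menger and BFS levels, and then extend it to $k$ sets by induction. The grouping step and the hereditary reduction are routine.
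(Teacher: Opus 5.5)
Your proposal is correct and matches the paper, which gives no proof of its own and simply cites Alon--Seymour--Thomas. Your reduction is sound: $K_h$-minor-freeness passes to induced subgraphs, and grouping the components of size at most $n/2$ into two sides of size at most $\tfrac{2}{3}n$ gives $c_h=h^{3/2}$. This makes explicit the conversion from AST's component-wise form to the two-sided, hereditary notion of balanced separability that the paper's statement tacitly assumes.

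The only caveat concerns your optional sketch of the AST core, which should not be read as a proof.
\begin{itemize}
\item In case (ii) the set $Y$ lies inside $B$ and is disjoint from every $C_i$. So it cannot be used to \emph{replace $C_i$ by a smaller portion}.
\item Removing vertices from $\bigcup\mathcal{C}$ can only merge components of $G-\bigcup\mathcal{C}$, while case (i) enlarges $\bigcup\mathcal{C}$. Hence the total size of $\bigcup\mathcal{C}$ is not a monotone potential, and this move never destroys the big component.
\item Once no short tree exists, $Y$ is the only thing in your process that cuts $B$. So in general the output cannot be $\bigcup\mathcal{C}$ alone: a separator from the tree-or-separator lemma has to be included. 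The budget leaves room for exactly one such set, since $(h-1)h^{1/2}n^{1/2}+(h-1)n^{1/2}/h^{1/2}\le h^{3/2}n^{1/2}$. Your bound $h\cdot h^{1/2}n^{1/2}$ for $\bigcup\mathcal{C}$ overcounts by just this slack.
\end{itemize}
A correct progress argument for case (ii), together with the extension of the layering lemma from two sets to $k$ sets, is what you would still have to supply to replace the citation by a proof.
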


An alternative generalization is based on treewidth:
\begin{theorem}[Separators in Graphs of Bounded Treewidth, Robertson and Seymour \cite{treewidth-props}] \label{thm:tw-separator}
Let $\mathcal{G}$ be a hereditary graph class such that the treewidth of any graph $G\in\mathcal{G}$ is bounded by $s(|V(G)|)$ for some non-decreasing function $s: \mathbb{N} \to \mathbb{R}_{\ge 0}$. Then $\mathcal{G}$ is \alphasbalsep{\frac{2}{3}}{s+1}.
\end{theorem}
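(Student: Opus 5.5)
The plan is the classical centroid-bag argument for tree decompositions, followed by a simple grouping step that turns a ``$\tfrac12$-component'' separator into a bipartition with balance $\tfrac23$. Fix $G\in\mathcal{G}$ and $X\subseteq V(G)$, and put $m:=|X|$. Since $\mathcal{G}$ is hereditary, $G[X]\in\mathcal{G}$, so $G[X]$ has a tree decomposition $(T,\{B_t\}_{t\in V(T)})$ of width at most $s(m)$. Hence every bag satisfies $|B_t|\le s(m)+1$. The goal is to find one bag $B_{t^*}$ and a bipartition of the components of $G[X]-B_{t^*}$ into two sides of at most $\tfrac23 m$ vertices each. Then $S:=B_{t^*}$ is the required $\tfrac23$-separator with $|S|\le s(m)+1$.

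\emph{Step 1 (centroid bag).} I would show that some node $t^*$ of $T$ has the property that every connected component of $G[X]-B_{t^*}$ has at most $m/2$ vertices. Start at an arbitrary node $t$. If some component $C$ of $G[X]-B_t$ has more than $m/2$ vertices, then $C$ is connected and avoids $B_t$. By the connectivity property of tree decompositions, every vertex of $C$ then appears only in bags of a single component $T'$ of $T-t$. Move to the unique neighbour $t'$ of $t$ in $T'$ and repeat.

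The main point is to show that this walk never returns along an edge, so it terminates because $T$ is finite. Suppose that at $t'$ the large component $C'$ lay on the side of $t$. Then $C'$ would consist of vertices appearing only in bags on the $t$-side of the edge $tt'$, while $C$ consists of vertices appearing only on the $t'$-side. So $C$ and $C'$ would be disjoint, each of size more than $m/2$, which contradicts $|X|=m$. The walk therefore ends at a node $t^*$ where all components have at most $m/2$ vertices. This is the step requiring the most care, but it is standard.

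\emph{Step 2 (grouping into two sides).} Let $C_1,\dots,C_k$ be the components of $G[X]-B_{t^*}$, so each $|C_j|\le m/2$. If some $|C_j|\ge m/3$, put $V_1:=C_j$ and let $V_2$ be the union of all other components. Then $|V_1|\le m/2\le \tfrac23 m$ and $|V_2|\le m-\tfrac m3=\tfrac23 m$. Otherwise every component has fewer than $m/3$ vertices. Add components to $V_1$ one at a time until $|V_1|\ge m/3$ (if this never happens, then $|V_1|<m/3$ and we may take $V_2=\emptyset$). At the moment of stopping, $|V_1|<m/3+m/3=\tfrac23 m$, and the remaining vertices satisfy $|V_2|\le m-\tfrac m3=\tfrac23 m$. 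No edges run between $V_1$ and $V_2$, so $S=B_{t^*}\subseteq X$ is a $\tfrac23$-separator of $G[X]$ with $|S|\le s(|X|)+1$, as claimed.
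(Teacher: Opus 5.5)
Your proof is correct. The paper gives no proof of its own and only cites Robertson and Seymour, and your argument is the standard one behind that result: a non-backtracking walk in the decomposition tree finds a bag whose removal leaves only components of at most $\frac{1}{2}|X|$ vertices, and greedily grouping those components gives two sides of at most $\frac{2}{3}|X|$ vertices each. One small step is left implicit: concluding that all bags meeting $C$ lie in a single component of $T-t$ also uses the edge-covering property of the decomposition together with the connectivity of $C$, not only the subtree (connectivity) property for individual vertices.
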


In particular, graphs $G$ with treewidth bounded by a constant $k$ are \alphasbalsep{\frac{2}{3}}{k+1}.

\begin{corollary}
Let $\varepsilon >0$ be arbitrary.
SR-GED computes $\GED(G, H)$ in $\expO{(4 + \varepsilon)^n}$ time and polynomial space if $G$ is $K_h$-minor-free for fixed $h$, or belongs to a hereditary graph class of strictly sublinearly bounded treewidth.
\end{corollary}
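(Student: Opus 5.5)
The plan is to split the statement into its two cases and reduce each to Theorem~\ref{thm:exptime} (or its corollaries) plus the memory usage bound.

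\emph{Case 1: $G$ is $K_h$-minor-free for a fixed $h$.} The class of $K_h$-minor-free graphs is hereditary, because any minor of an induced subgraph of $G$ is also a minor of $G$. By Theorem~\ref{thm:minor-separator}, every such graph, and hence every induced subgraph $G[X]$, has a $\frac{2}{3}$-separator of size at most $c_h\sqrt{|X|}$. So $G$ is \alphasbalsep{\frac{2}{3}}{s} with $s(n)=c_h\sqrt{n}$. This $s$ is non-decreasing and lies in $\bigO{n^{1-\delta}}$ for $\delta=\frac12$, so in particular $s(n)=\notbigO{\frac{n}{\log^2 n}}$. Theorem~\ref{thm:exptime} with $\beta=\frac23$ then gives the running time $\expO{(4+\varepsilon)^n}$.

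\emph{Case 2: $G$ lies in a hereditary class of strictly sublinearly bounded treewidth.} Here the treewidth is at most $s(n)=\bigO{n^{1-\delta}}$. Theorem~\ref{thm:tw-separator} requires $s$ to be non-decreasing. If the given bound is not monotone, I will replace it by $\tilde s(n):=\max_{m\le n}s(m)$. This $\tilde s$ is non-decreasing and still in $\bigO{n^{1-\delta}}$, since $m^{1-\delta}\le n^{1-\delta}$ for $m\le n$ (after adjusting constants for small $n$). Theorem~\ref{thm:tw-separator} then shows the class is \alphasbalsep{\frac23}{\tilde s+1}. Since $\tilde s+1$ is again non-decreasing and in $\bigO{n^{1-\delta}}$, Theorem~\ref{thm:exptime} applies and gives the same running time.

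\emph{Space.} The space bound comes from the memory usage discussion. Every separator produced by Lemma~\ref{lemma:balance-sep} is $\alpha$-balanced with $\alpha<1$, so each recursive call shrinks the instance by a factor $\alpha$ and the recursion depth is $\bigO{\log n}$. Each level stores $\bigO{n^2}$ data, and the enumerations over $\psi$, over the bipartitions $W_1$, and over the candidate separators can all be done iteratively in polynomial space. Hence the total auxiliary memory is polynomial, namely $\bigO{n^2\log n}$.

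The main obstacle is only the bookkeeping in checking the hypotheses of Theorem~\ref{thm:exptime}: the hereditary property, the monotonicity of $s$ (handled by the running maximum $\tilde s$ above), and that the separators are taken inside induced subgraphs.
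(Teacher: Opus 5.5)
Your proposal is correct and matches the paper's argument, which the paper leaves implicit. It obtains balanced separability from Theorem~\ref{thm:minor-separator} and Theorem~\ref{thm:tw-separator}, applies Theorem~\ref{thm:exptime} (equivalently Corollary~\ref{cor:runtime-hereditary}), and takes the polynomial space bound from the memory-usage discussion; your running maximum $\tilde s$ is a harmless extra detail, since the paper simply assumes the treewidth bound is non-decreasing.
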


Computations show that the graphs in GEDLIB \cite{gedlib}, a standard collection of benchmark data sets for GED, have treewidths bounded by small constants (see Appendix \ref{appendix:tw-gedlib} for details).\medskip

As the TSP on a graph $G$ can be reduced to a GED problem between a cycle on $n$ nodes and $G$, SR-GED can be leveraged to solve TSP instances in exponential time.

\begin{theorem}
Let $\varepsilon > 0$. TSP can be solved in $\expO{(4 + \varepsilon)^n}$ time and polynomial space.
\end{theorem}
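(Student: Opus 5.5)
The plan is to reduce TSP on an edge-weighted graph $G$ with $n$ nodes to a single GED instance between the cycle $C_n$ and $G$, as sketched in the Related Work section. We then apply SR-GED with the cycle in the role of the structured graph. Take the first graph to be $C_n$ on node set $\{1,\dots,n\}$ and the second to be the complete graph $K_n$ on $V(G)$; a missing edge of $G$ can be given weight $\infty$, or a sufficiently large value $M$ such as $M > \sum_e w(e)$, so that TSP on incomplete graphs is covered as well. Set all node costs $\NEC \equiv 0$. For $e\in E(C_n)$ and $f\in E(K_n)$ let $\EEC(e,f)=w(f)$, let $\EEC(e,\emptyset)=M'$ for a prohibitively large $M'$, and set all insertion costs to $0$.

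For any bijection $\phi$, each cycle edge $\{i,i+1\}$ is mapped to the edge $\{\phi(i),\phi(i+1)\}$, which always exists in $K_n$; hence no deletions occur. The induced cost $\IEC(\phi)$ therefore equals the total weight of the Hamiltonian cycle $\phi(1),\dots,\phi(n)$. Conversely, every Hamiltonian cycle arises from some $\phi$. So the GED equals the optimal tour length, and the minimizing node map (which the paper notes SR-GED can also output) gives the tour. Two details need care. The first is that the cost functions are reasonable in the paper's sense: substituting $e\to f$ costs $w(f)\le M'+0$ once $M' \ge \max w$, and node costs are all zero. The second is that the preprocessing is polynomial.

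Next, check the hypotheses for the runtime. Every induced subgraph of a cycle is a disjoint union of paths, so this class is hereditary and has treewidth at most $2$. By \Cref{thm:tw-separator} the cycle is therefore \alphasbalsep{\frac{2}{3}}{3}. One could also exhibit the separators directly: cut a path in the middle, and cut a cycle at two antipodal nodes. A constant $s$ is certainly $\notbigO{n/\log^2 n}$, so \Cref{thm:exptime}, or \Cref{cor:runtime-hereditary}, gives running time $\expO{(4+\varepsilon)^n}$. The polynomial space bound follows from the Memory Usage paragraph.

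The only real obstacle is making the reduction fit the paper's restricted formulation: equal node counts, simple graphs, costs only on pairs in $\PE{V}$, and the reasonableness requirement. The point to verify is that the large constants cannot create a cheaper node map. Since the second graph is complete, every bijection avoids all deletion costs, so this holds trivially. If $G$ has no Hamiltonian cycle, every tour uses an $M$-edge, and the result can be detected by comparing the value with $M$.
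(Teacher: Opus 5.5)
Your proof is correct and follows the paper's route: reduce TSP to a GED instance between the cycle $C_n$ (as the structured graph) and the weighted input graph, then apply the runtime bound for graph classes of bounded treewidth (induced subgraphs of a cycle have treewidth at most $2$) together with the polynomial memory bound. The paper maps into $G$ itself and makes deletion of cycle edges prohibitively expensive, while you complete $G$ to $K_n$ with large weights on the missing edges; this is only a cosmetic difference.
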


This essentially matches Gurevich and Shelah \cite{tsp-exptime-polyspace}, who showed that the TSP can be solved in time $\expO{4^n}$ and polynomial space (cf.\ \cite{tsp-exptime-polyspace-nice}). \medskip

Further, SR-GED gives rise to an exponential time algorithm for the Quadratic Assignment problem (QAP), as originally defined by Koopmans and Beckmann \cite{kb-qap}, if the interaction network of nonzero flows has a suitable structure.

\begin{theorem}[cf. Appendix \ref{appendix:qap}]
Let $\varepsilon > 0$. A QAP of size $n$ with symmetric distances can be solved in time $\expO{(4 + \varepsilon)^n}$ if the interaction network of nonzero flows is \alphasbalsep{\alpha}{\bigO{n^{1-\delta}}} for some constants $\alpha \in \left[ \frac{1}{2}, 1 \right)$, $\delta > 0$.
\end{theorem}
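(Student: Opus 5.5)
The plan is to reduce QAP to a GED instance and then apply the runtime analysis of SR-GED (\Cref{thm:runtime}, via \Cref{thm:exptime} and its corollary for strictly sublinear separators).

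\textbf{Encoding.} A Koopmans--Beckmann QAP instance of size $n$ consists of a flow matrix $A=(a_{ij})$ on facilities $\{1,\dots,n\}$, a symmetric distance matrix $B=(b_{kl})$ on locations $\{1,\dots,n\}$, and optionally linear costs $\ell_{ik}$. The objective is
\[
  \min_{\pi} \sum_{i,j} a_{ij}\, b_{\pi(i)\pi(j)} + \sum_i \ell_{i\pi(i)} .
\]
Because $B$ is symmetric, the ordered pairs $(i,j)$ and $(j,i)$ can be merged into the single unordered pair $\{i,j\}$ with weight $(a_{ij}+a_{ji})\, b_{\pi(i)\pi(j)}$. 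Diagonal terms $a_{ii}b_{\pi(i)\pi(i)}$ depend only on $\pi(i)$ and are folded into the linear costs.

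\textbf{The GED instance.} Let $G$ be the interaction network, with edge set $\{\{i,j\}: a_{ij}+a_{ji}\neq 0\}$. Let $H$ be the complete graph on the locations. Set
\[
  \NEC(i,k) := \ell_{ik} + a_{ii} b_{kk}, \qquad
  \EEC(\{i,j\},\{k,l\}) := (a_{ij}+a_{ji})\, b_{kl}.
\]
For pairs $\{i,j\}\notin E(G)$ the flow is zero, so this formula already gives $0$, matching the convention that $\EEC(\emptyset,f)=0$ for such pairs. With these costs, $\IEC(\pi)$ equals the QAP objective for every bijection $\pi$, so the QAP optimum is exactly $\GED(G,H,\NEC,\EEC)$.

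\textbf{Technical wrinkles.} Costs may be negative or violate the reasonableness assumption. The formulation over bijections does not need reasonableness, since we minimize only over bijections and SR-GED only enumerates bijections. Nonnegativity can be restored by adding a large constant to every entry of $\NEC$ and every edge-pair entry of $\EEC$. This shifts every bijection's cost by the same amount: $nK$ from nodes, plus $\binom n2 K$ if it is applied uniformly to all pairs. The uniform shift keeps the correctness argument for SR-GED intact, since Lemma~\ref{lemma:ged-split} uses no sign or metric properties. The preprocessing in Appendix~\ref{appendix:restrictions} can also be invoked here.

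\textbf{Applying the runtime bound.} $G$ is \alphasbalsep{\alpha}{\bigO{n^{1-\delta}}} by hypothesis, and $H$ is unrestricted. The corollary following \Cref{thm:exptime}, with $\beta=\alpha$ and $s(n)=\bigO{n^{1-\delta}}$, gives a running time of $\expO{(4+\varepsilon)^n}$. The reduction itself is polynomial, so the same bound holds for QAP.

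\textbf{Main obstacle.} The step needing the most care is verifying that SR-GED's correctness survives the zero-cost convention once $H$ is complete and costs may be negative. Concretely, in Lemma~\ref{lemma:ged-split} the term $\deltaECplus(W_1)$ must equal $\sum \EEC(\{v_1,v_2\},\{w_1,w_2\})$ over the cut. This holds because there are no $G$-edges across the cut and the corresponding flows are zero, so the term is simply $0$. I would check this explicitly, along with the fact that the proof of \Cref{thm:runtime} uses no property of the costs.
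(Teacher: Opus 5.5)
Your proposal is correct and follows the paper's route. You encode the symmetric QAP as a GED instance between the flow graph (edges where the symmetrized flow is nonzero) and a complete location graph, with zero insertion costs and $\EEC(\{i,j\},\{k,l\}) = (a_{ij}+a_{ji})b_{kl}$, and then invoke the strictly-sublinear-separator runtime corollary. Your uniform shift to restore nonnegativity is a harmless extra, but drop the remark that the preprocessing of Appendix~\ref{appendix:restrictions} can be invoked: with zero insertion costs and unspecified deletion costs, the triangle-inequality closure would cap substitution costs at the deletion cost and could change the QAP optimum (the paper instead gives deletions prohibitively high cost, and your bijection-only argument does not need the preprocessing).
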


\section{Conclusion and Future Work}

We presented SR-GED, a novel algorithm for exact GED computation with running time $\expO{(4 + \varepsilon)^n}$ for arbitrary $\varepsilon > 0$ and polynomial space on all hereditary classes of graphs that admit strictly sublinear balanced separators, such as planar graphs. 

To improve the worst-case running time, trading time for space via dynamic programming is worth exploring, since many subproblems considered in each recursive step are similar. 
For practical applications, our theoretical algorithm requires further engineering. 
Here, limiting the number of bipartitions that need to be explored by using heuristic pruning strategies is a natural next step.
Additionally, given the recursive nature of our algorithm, it can be combined with practical state-of-the-art approaches, such as ILPs, in a straightforward way.

\bibliography{lipics-v2021-sample-article}

\newpage

\appendix

\section{Removing the Constraints on our Formulation}\label{appendix:restrictions}
We show that all the restrictions we imposed in our problem formulations can be removed with a polynomial-time preprocessing. Let $G'$ and $H'$ be two arbitrary graphs.
\begin{itemize}
    \item The given definition of GED based on node maps has been shown to be equivalent to the classic definition based on edit paths if the node and edge edit costs satisfy the triangle inequality \cite{ilp-mip-survey}. 
    If they do not, a simple cubic time preprocessing based on an All-Pairs-Shortest-Paths computation can be applied, which leaves the GED invariant but ensures that the edit costs satisfy the triangle inequality \cite{ilp-mip-survey}. 
    \item Further, the triangle inequality implies that the costs are reasonable, i.e., we can conclude that there are no node insertions or deletions in an optimum solution if $|V(G')|=|V(H')|$.    
    \item If $|V(G')| < |V(H')|$, we add $k := |V(H')| - |V(G')|$ dummy vertices $V_\varepsilon := \{\varepsilon_1, \dots, \varepsilon_k\}$ to $G'$, and define $G := G' + V_\varepsilon$. 
    We then take the edit cost $\NEC(\varepsilon_i, w)$ to be the cost of inserting $w$; this ensures that $\GED(G, H') = \GED(G', H')$. Any optimum node map from $G$ to $H'$ corresponds to an optimum (injective) node map from $G'$ to $H'$.
    Note that, given reasonable edit costs, we never want to delete nodes from the smaller graph.
    \item For non-simple graphs, loops can be absorbed into the node edit costs. Parallel edges can also be handled efficiently. Let $v_1, v_2 \in V(G')$ and $w_1, w_2 \in V(H')$. If there are any edges between $v_1$ and $v_2$, we add a single corresponding edge to $E(G)$; analogously, $E(H)$ consists of pairs of nodes that have at least one edge between them. The costs of deletions or insertions are simply the sums over the corresponding bundles of parallel edges.
    To determine the edge edit cost for a substitution $\EEC(\{v_1, v_2\}, \{w_1, w_2\})$, we solve an assignment problem to find a mapping between the bundles of parallel edges that minimizes the cost, which can be done in polynomial time.
\end{itemize}

\newpage

\section{Relationship to QAP}
\label{appendix:qap}

The (symmetric) Quadratic Assignment Problem (QAP) introduced by Koopmans and Beckmann \cite{kb-qap} -- not to be confused with a more general problem posed by Lawler \cite{lawler-qap}, which is also referred to as QAP -- can be expressed as a special case of a GED problem if fully flexible edit costs are allowed, as is the case in our definition. Because SGI can also be reduced to QAP, the lower bound of $n^{\notbigO{n}}$ under the ETH \cite{sgi-hardness} also applies to the QAP, thus again a restriction is justified if better runtimes are to be achieved. 

\begin{definition}[Quadratic Assignment Problem (QAP), Koopmans and Beckmann \cite{kb-qap}]
    Let $N := \{1, \dots, n\}$. We imagine a set of $n$ facilities is to be placed at $n$ locations.
    Let $f_{i,j}$ be the flow between facilities $i$ and $j$. Let $d_{k,l}$ be the distance between locations $k$ and $l$. Let $b_{i,k}$ denote the cost of placing the $i$-th facility at the $k$-th location.

    We wish to minimize $\sum_{i=1}^n b_{i,\varphi(i)} + \sum_{i=1}^n \sum_{j=1}^n f_{i,j} d_{\varphi(i), \varphi(j)} $ over all permutations $\varphi \in \bij{N}{N}$.
\end{definition}

\begin{theorem}
A QAP of size $n$, with a symmetric distance function, can be reduced to a GED problem of the same size in polynomial time.
\end{theorem}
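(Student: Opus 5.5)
We construct the GED instance directly from the QAP data, taking $G$ to be the interaction network of flows and $H$ the complete graph on the locations. Set $V = W = N$, and let $\NEC(i,k) := b_{i,k}$. The quadratic term must be folded into edge edit costs. Note that $\EEC$ is indexed by unordered pairs, so we first symmetrize the flows: the quadratic term equals
\[
\sum_{i} f_{i,i} d_{\varphi(i),\varphi(i)} + \sum_{\{i,j\}\in\PE{N}} (f_{i,j}+f_{j,i})\, d_{\varphi(i),\varphi(j)},
\]
where we use $d_{k,l}=d_{l,k}$. This is exactly where the symmetry assumption on the distances enters. The diagonal terms $f_{i,i}d_{\varphi(i),\varphi(i)}$ depend only on the single assignment $i\mapsto\varphi(i)$, so we absorb them into the node costs by setting $\NEC(i,k) := b_{i,k} + f_{i,i} d_{k,k}$.

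Next we choose the edges. Let $E := \{\{i,j\} : f_{i,j}+f_{j,i} \neq 0\}$, the interaction network, and let $F := \PE{N}$, the complete graph. For $e=\{i,j\}\in E$ and $g=\{k,l\}\in F$, define the substitution cost $\EEC(e,g) := (f_{i,j}+f_{j,i}) d_{k,l}$. This is well defined on unordered pairs because of symmetry. Every $e\notin E$ is paired with some $g\in F$, which costs an insertion, so we set $\EEC(\emptyset,g):=0$. Deletion costs are never incurred, since $F$ is complete; we set them to $0$, or to anything convenient. Then for every bijection $\phi$ we get $\IEC(\phi) = \sum_i b_{i,\phi(i)} + \sum_{i,j} f_{i,j} d_{\phi(i),\phi(j)}$, by Definition~\ref{def:IEC}. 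Minimizing over bijections therefore gives the QAP optimum, and the construction takes $\bigO{n^2}$ time with the same $n$.

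The main obstacle is that the paper's cost functions map into $\posreals$, while QAP data may be negative. It is also worth checking the "reasonable cost" assumption, although with $|V|=|W|$ and bijective maps the node-map formulation is used directly, so that assumption is not needed. To handle negativity, first add a large constant $M$ to every node cost and every substitution cost that could be negative. Since every node map substitutes exactly $n$ nodes, the node-cost shift changes every $\IEC(\phi)$ by exactly $nM$. Since $F$ is complete, every node map substitutes exactly $|E|$ edges, so the edge-cost shift changes every $\IEC(\phi)$ by exactly $|E|M$. These uniform shifts preserve the optimal $\phi$, and the optimum value is recovered by subtracting them. Finally, because $G$ is exactly the interaction network of nonzero flows, structural assumptions on that network, such as balanced separators, transfer directly to $G$.
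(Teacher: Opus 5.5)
Your proposal is correct and is essentially the paper's construction: absorb the diagonal terms $f_{i,i}d_{k,k}$ into the node costs, symmetrize the flows using $d_{k,l}=d_{l,k}$, and set $\EEC(\{i,j\},\{k,l\})=(f_{i,j}+f_{j,i})\,d_{k,l}$, which is the paper's $2f'_{i,j}d_{k,l}$. The paper takes $G$ complete in this proof and only passes to the sparse interaction network with zero insertion costs in the subsequent QAP theorem. Your uniform shift by $M$ to make costs nonnegative is a valid addition that the paper omits. Two small clarifications: the shift must be applied to all node and substitution costs, not only those that could be negative, for it to equal exactly $nM+|E|M$; and if you also want the paper's reasonableness condition to hold, give deletions a large cost rather than $0$, as the paper does.
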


\begin{proof}
We may assume $d_{k,k}=0$; nonzero $d_{k,k}$ entries can be absorbed into $b_{i,k}$.
If the distances are symmetric, the objective function simplifies to $\sum_{i=1}^n b_{i,\varphi(i)} + \sum_{i=1}^n \sum_{j=1}^n \frac{1}{2} (f_{i,j} + f_{j,i}) d_{\varphi(i), \varphi(j)}$,
which is the same as the objective function for a symmetrized flow $f'_{i,j} := \frac{1}{2}(f_{i,j} + f_{j,i})$.
Therefore, assume that $f$ is symmetric as well.    

The idea is to let $G$ be a graph representing the facilities and $H$ be a graph representing the possible locations.
We set $V = W = N$ and $E = F = \PE{V} = \PE{W}$, so $G = H$ are two complete $n$-node graphs.
The node edit costs are simply the costs of opening facilities at locations, $\NEC = b$.

The edge edit costs are defined by the cost incurred when routing the flow between facilities $i,j$ between the two locations $k,l$. We also need to account for the fact that the above formulation counts these twice. We set $\EEC(\{i,j\},\{k,l\}) := 2f_{i,j} d_{k,l}$.
\end{proof}

Note that this reduction relies crucially on the definition of the GED, where our algorithm permits us to allow arbitrary cost functions.

\begin{theorem}
A QAP of size $n$ with symmetric distances can be solved in time $\expO{(4 + \varepsilon)^n}$ if the interaction network of nonzero flows is \alphasbalsep{\alpha}{\bigO{n^{1-\delta}}} for some constants $\alpha \in [\frac{1}{2}, 1)$ and $\delta > 0$.
\end{theorem}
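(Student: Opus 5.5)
The plan is to combine the reduction from the previous theorem with the runtime result for SR-GED. First apply the polynomial-time reduction to a GED instance on two complete graphs $G = H = K_n$, with $\NEC = b$ and $\EEC(\{i,j\},\{k,l\}) = 2f_{i,j}d_{k,l}$ (after symmetrizing $f$ and absorbing $d_{k,k}$ into $b$). SR-GED cannot be run directly on this instance, because $G$ is complete and has no small balanced separators. The first step is therefore to replace $G$ by the interaction network $G_f := (N, \{\{i,j\} : f_{i,j} \neq 0\})$.

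To make this replacement harmless, we adjust the costs so that the GED between $G_f$ and $H = K_n$ equals the QAP objective for every node map $\phi$. We keep $\NEC = b$. For $e = \{i,j\} \in E(G_f)$ and $\{k,l\} \in F$ we set the substitution cost to $2f_{i,j}d_{k,l}$. For $e \notin E(G_f)$ we set the insertion cost $\EEC(\emptyset,\{k,l\}) := 0$, which matches $2f_{i,j}d_{k,l} = 0$. Since $H$ is complete, every edge of $G_f$ is mapped to an edge of $H$, so deletion costs never occur and can be set arbitrarily. Then $\IEC(\phi)$ equals the QAP objective term by term, using the pairing in the definition of $\EEC(\phi)$.

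One point needs checking. The paper assumes reasonable costs, but that assumption only served to justify restricting to bijections, and here both sides range over permutations by definition. The algorithm and the correctness theorem also never used metric properties. This concerns only the general problem; the identity between $\IEC(\phi)$ and the QAP objective holds regardless. Negative $f$ or $d$ would conflict with $\EEC$ taking values in $\posreals$. I would handle this by noting that the correctness proof (Lemma~\ref{lemma:delatec}, Lemma~\ref{lemma:ged-split}, Corollary~\ref{lemma:ged-embed}) only uses additivity and never nonnegativity, so real-valued costs are fine. Alternatively, one can add a constant to every pair term, which changes the objective by the same amount for every $\phi$. I expect this bookkeeping, rather than any runtime estimate, to be the main subtlety.

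Finally, $G_f$ is \alphasbalsep{\alpha}{\bigO{n^{1-\delta}}} by hypothesis. Since $n^{1-\delta} = \notbigO{n/\log^2 n}$, Theorem~\ref{thm:exptime} applies with $\beta := \alpha$ (if $\alpha = 1/2$, pass to any $\beta \in (1/2,1)$, since a $1/2$-separator is also a $\beta$-separator). SR-GED then computes $\GED(G_f, K_n)$, which equals the QAP optimum, in time $\expO{(4+\varepsilon)^n}$; the reduction itself costs only polynomial time. If desired, the minimizing permutation is recovered as noted after Definition~\ref{def:restrictedGED}.
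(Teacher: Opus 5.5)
Your proposal is correct and follows the paper's proof. You restrict the facility graph to the nonzero-flow edges, set insertion costs to zero, observe that deletions cannot occur because $H$ is complete, and invoke the SR-GED runtime bound; you cite Theorem~\ref{thm:exptime} directly, whereas the paper cites Corollary~\ref{cor:runtime-hereditary}. Your remark on nonnegativity of the costs is a useful addition the paper omits, while the separate treatment of $\alpha = \tfrac{1}{2}$ is unnecessary, since Theorem~\ref{thm:exptime} already allows any $\beta \in (0,1)$.
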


\begin{proof}
We need to be able to exploit a (typically sparse) structure of $f$ for our algorithm to achieve a better time complexity. Fix an edge $\{i,j\}$ with $f_{i,j} = 0$. All edit costs $\EEC(\{i,j\}, \{k,l\})$, for any edge $\{k,l\}$ in $F$, will be zero.
We remove all edges with zero flow from the facility graph by setting $E = \{\{i, j\} : f_{i,j} \neq 0\}$. Define the cost of an insertion of an edge $\{k,l\}$ to be zero (so that we do not incur any costs for distances along which there is no flow). Since $H$ is complete, no deletions are possible, hence the cost of deletions can be chosen arbitrarily under our definitions. If deletions must be given a cost, they can be forbidden by giving them prohibitively high cost (one plus the maximum over all possible substitution costs will do), reflecting the fact that all nonzero flows must be satisfied.
By \Cref{cor:runtime-hereditary}, the resulting GED instance is solvable in time $\expO{(4 + \varepsilon)^n}$.
\end{proof}

\newpage

\section{Treewidths GEDLIB}
\label{appendix:tw-gedlib}

We argue that many standard benchmark datasets do admit suitable structure for our algorithm. We compute treewidths of graphs in GEDLIB \cite{gedlib,iam,cmu} using an exact algorithm \cite{tamaki-tw}. The results are given in Table \ref{tab:tws}. We record for each dataset $\mathcal{G}$ the number of graphs $|\mathcal{G}|$, the average number of nodes ($n$) resp.\ edges ($m$), and the maximum treewidth of any graph in $\mathcal{G}$. Graphs without edges are omitted. As can be seen, all data sets consist of graphs with treewidths bounded by small constants.
\begin{table}[h!]
\begin{tabular}{lrrrrr}
\toprule
Dataset & $|\mathcal{G}|$ & max.\ $n$ & avg.\ $n$ & avg.\ $m$ & max.\ treewidth \\
\midrule
acyclic & 185 & 11 & 8.1 & 7.1 & 1 \\
AIDS & 2000 & 95 & 15.7 & 16.2 & 3 \\
AIDS-EDIT & 2214 & 33 & 32.0 & 33.4 & 2 \\
alkane & 149 & 10 & 8.9 & 7.9 & 1 \\
CMU-GED & 111 & 30 & 30.0 & 79.1 & 7 \\
Fingerprint & 3061 & 26 & 7.2 & 5.9 & 3 \\
GREC & 1100 & 24 & 11.5 & 11.9 & 3 \\
Letter & 6725 & 9 & 4.7 & 3.6 & 3 \\
mao & 68 & 27 & 18.4 & 19.6 & 2 \\
Mutagenicity & 4337 & 417 & 30.3 & 30.8 & 4 \\
pah & 94 & 28 & 20.7 & 24.4 & 4 \\
Protein & 600 & 126 & 32.6 & 62.1 & 8 \\
S-acyclic & 732 & 11 & 8.2 & 7.2 & 1 \\
S-mao & 272 & 27 & 18.4 & 19.6 & 2 \\
S-MOL & 600 & 12 & 10.0 & 9.0 & 1 \\
S-MOL-5 & 50000 & 15 & 12.5 & 11.5 & 1 \\
\bottomrule
\end{tabular}
\caption{Treewidths of datasets in GEDLIB.}
\label{tab:tws}
\end{table}

\end{document}